\documentclass[10pt,a4paper]{article}
\usepackage[english]{babel}
\usepackage{german}
\usepackage{amssymb}
\usepackage{amsfonts}
\usepackage{amsmath}
\usepackage[latin1]{inputenc}
\usepackage{fullpage}
\usepackage{cite}
\usepackage{graphicx}
\usepackage{subfigure}
\usepackage{algorithm}
\usepackage[noend]{algorithmic}
\usepackage{amsmath,amssymb,cite}
\usepackage{footmisc,marvosym}
\usepackage{wasysym,vmargin,stackrel}

\newtheorem{theorem}{Theorem}

\newtheorem{corollary}{Corollary}

\newtheorem{definition}{Definition}

\newtheorem{observation}{Observation}

\newtheorem{lemma}{Lemma}

\newtheorem{proposition}{Proposition}

\newenvironment{proof}[1][Proof]{\noindent\textbf{#1.} }{\ \rule{0.5em}{0.5em}}

\sloppy
\setmarginsrb{3.5cm}{2.25cm}{3.5cm}{3.05cm}{0.3cm}{0.3cm}{-0.3cm}{1.0cm}
\addtolength{\oddsidemargin}{-1.3cm} 
\addtolength{\evensidemargin}{-1.3cm} 
\addtolength{\textwidth}{2.6cm} 
\addtolength{\topmargin}{-0.9cm} 
\addtolength{\textheight}{2.3cm} 

\begin{document}
\title{Natural Models for Evolution on Networks} 
\author{George B. Mertzios\thanks{School of Engineering and Computing Sciences, Durham University, UK. 
Email: \texttt{george.mertzios@durham.ac.uk}} 
\and Sotiris Nikoletseas\thanks{Computer Technology Institute and University of Patras, Greece.
Email: \texttt{nikole@cti.gr}} 
\and Christoforos Raptopoulos\thanks{Computer Technology Institute and University of Patras, Greece.
Email: \texttt{raptopox@ceid.upatras.gr}} 
\and Paul G. Spirakis\thanks{Computer Technology Institute and University of Patras, Greece.
Email: \texttt{spirakis@cti.gr}}}
\date{\vspace{-1cm}}
\maketitle

\begin{abstract}
Evolutionary dynamics have been traditionally studied in the context of
homogeneous populations, mainly described by the Moran process~\cite{Moran58}. 
Recently, this approach has been generalized in~\cite{Nowak05} by
arranging individuals on the nodes of a network (in general, directed). In
this setting, the existence of directed arcs enables the simulation of
extreme phenomena, where the fixation probability of a randomly placed mutant 
(i.e.~the probability that the offsprings of the mutant eventually spread over the whole population) 
is arbitrarily small or large. 
On the other hand, undirected networks
(i.e.~undirected graphs) seem to have a smoother behavior, and thus it is 
more challenging to find suppressors/amplifiers of selection, that is,
graphs with smaller/greater fixation probability than the complete graph
(i.e.~the homogeneous population). In this paper we focus on undirected
graphs. We present the first class of undirected graphs which act as
suppressors of selection, by achieving a fixation probability that is at most one
half of that of the complete graph, as the number of vertices increases.
Moreover, we provide some generic upper and lower bounds for the fixation
probability of general undirected graphs. As our main contribution, we
introduce the natural alternative of the model proposed in~\cite{Nowak05}. 
In our new evolutionary model, all individuals interact
\emph{simultaneously} and the result is a compromise between aggressive and
non-aggressive individuals. That is, the behavior of the individuals in our
new model and in the model of~\cite{Nowak05} can be interpreted as an 
\emph{``aggregation''} vs.~an \emph{``all-or-nothing''} strategy, respectively. 
We prove that our new model of mutual influences admits a \emph{potential
function}, which guarantees the convergence of the system for any graph
topology and any initial fitness vector of the individuals. Furthermore, we
prove fast convergence to the stable state for the case of the complete graph, 
as well as we provide almost tight bounds on the limit fitness of the individuals. 
Apart from being important on its own, this new evolutionary
model appears to be useful also in the abstract modeling of 
control mechanisms over invading populations in networks. 
We demonstrate this by introducing and analyzing two alternative control approaches, 
for which we bound the time needed to stabilize to the ``healthy'' state of the system.\newline

\noindent \textbf{Keywords:} Evolutionary dynamics, undirected graphs, fixation probability, potential function, 
Markov~chain, fitness, population structure.

\end{abstract}



\section{Introduction\label{sec:intro}}

Evolutionary dynamics have been well studied (see~\cite{Traulsen09,Taylor06,Taylor04,Ohtsuki06,Karlin75,Broom09,Antal06}), 
mainly in the context of homogeneous populations, described by the Moran process~\cite{Moran58,NorrisMarkov}. 
In addition, population dynamics have been extensively studied also from the perspective of the 
strategic interaction in evolutionary game theory, cf.~for instance~\cite{Gintis00,Sandholm11,Evolution-Population98,Kandori93,Imhof05}. 
One of the main targets of evolutionary game theory is evolutionary dynamics (see~\cite{Weibull95,Evolution-Population98}). 
Such dynamics usually examine the propagation of intruders with a given \emph{fitness} to a population, 
whose initial members (resident individuals) have a different fitness. 
In fact, ``evolutionary stability'' is the case where no dissident behaviour can invade and dominate the population. 
The evolutionary models and the dynamics we consider here belong to this framework. 
In addition, however, we consider structured populations (i.e.~in the form of an undirected graph) 
and we study how the underlying graph structure affects the evolutionary dynamics. 
We study in this paper two kinds of evolutionary dynamics. Namely, the ``all or nothing'' case 
(where either the intruder overtakes the whole graph or die out) 
and the ``aggregation'' case (more similar in spirit to classical evolutionary game theory, 
where the intruder's fitness aggregates with the population fitness and generates eventually a homogeneous crowd with a new fitness).

In a recent article, Lieberman, Hauert, and Nowak proposed a generalization of the Moran process 
by arranging individuals on a connected network (i.e.~graph)~\cite{Nowak05} (see also~\cite{nowak06-book}). 
In this model, vertices correspond to individuals of the population and 
weighted edges represent the reproductive rates between the adjacent vertices. 
That is, the population structure is translated into
a network (i.e.~graph) structure. Furthermore, individuals (i.e.~vertices)
are partitioned into two types: \emph{aggressive} and \emph{non-aggressive}.
The degree of (relative) aggressiveness of an individual is measured by its 
\emph{relative fitness}; in particular, non-aggressive and aggressive
individuals are assumed to have relative fitness $1$ and $r\geq 1$, respectively. 
This modeling approach initiates an ambitious direction of interdisciplinary research, which combines 
classical aspects of computer science (such as combinatorial structures and complex network topologies), 
probabilistic calculus (discrete Markov chains), and fundamental~aspects of evolutionary game theory (such as evolutionary dynamics).

In the model of~\cite{Nowak05}, one \emph{mutant} (or \emph{invader}) with relative fitness $r\geq 1$ 
is introduced into a given population of \emph{resident} individuals, each of whom having relative fitness $1$. 
For simplicity, a vertex of the graph that is occupied by a mutant will be referred to as \emph{black},
while the rest of the vertices will be referred to as \emph{white}. At each
time step, an individual is chosen for reproduction with a probability
proportional to its fitness, while its offspring replaces a randomly chosen 
neighboring individual in the population. Once $u$ has been selected for 
reproduction, the probability that vertex $u$ places its offspring into 
position $v$ is given by the weight $w_{uv}$ of the directed arc $\langle
uv\rangle $. This process stops when either all vertices of the graph become
black (resulting to a \emph{fixation} of the graph) or they all become white
(resulting to \emph{extinction} of the mutants). 
Several similar models have been previously studied, describing for instance 
influence propagation in social networks (such as the decreasing cascade model~\cite{kempe05,Mossel07}), 
dynamic monopolies~\cite{Berger01}, 
particle interactions (such as the voter model, the antivoter model, and the exclusion process~\cite{Liggett85,Aldous-online-book,Durrett88}),~etc. 
However, the dynamics emerging from these models do not consider different fitnesses for the individuals.

The \emph{fixation probability} $f_{G}$ of a graph $G=(V,E)$ is the
probability that eventually fixation occurs, i.e.~the probability that an
initially introduced mutant, placed uniformly at random on a vertex of~$G$,
eventually spreads over the whole population $V$, replacing all resident
individuals. One of the main characteristics in this model is that at every
iteration of the process, a ``battle'' takes place between aggressive and
non-aggressive individuals, while the process stabilizes only when one of
the two teams of individuals takes over the whole population. This kind of
behavior of the individuals can be interpreted as an \emph{all-or-nothing}
strategy.

Lieberman et al.~\cite{Nowak05} proved that the fixation probability for
every symmetric directed graph (i.e.~when $w_{{uv}}=w_{{vu}}$ for every~$u,v$) 
is equal to that of the complete graph (i.e.~the homogeneous population of
the Moran process), which tends to $1-\frac{1}{r}$ as the size $n$ of the
population grows. Moreover, exploiting vertices with zero in-degree or zero
out-degree (``upstream'' and ``downstream'' populations, respectively), they
provided several examples of \emph{directed} graphs with arbitrarily small
and arbitrarily large fixation probability~\cite{Nowak05}. Furthermore, the
existence of directions on the arcs leads to examples where neither fixation
nor extinction is possible (e.g.~a graph with two sources).

In contrast, general \emph{undirected} graphs (i.e.~when $\langle uv\rangle
\in E$ if and only if $\langle vu\rangle \in E$ for every~$u,v$) appear to
have a smoother behavior, as the above process eventually reaches fixation
or extinction with probability~$1$. Furthermore, the coexistence of both
directions at every edge in an undirected graph seems to make it more
difficult to find \emph{suppressors} or \emph{amplifiers} of selection
(i.e.~graphs with smaller or greater fixation probability than the complete
graph, respectively), or even to derive non-trivial upper and lower bound
for the fixation probability on general undirected graphs. This is the main
reason why only little progress has been done so far in this direction and
why most of the recent work focuses mainly on the exact or numerical
computation of the fixation probability for very special cases of undirected
graphs, e.g.~the star and the path~\cite{Broom08,Broom-two-results,Broom10}.

\vspace{0.2cm} \noindent\textbf{Our contribution.} In this paper we overcome
this difficulty for undirected graphs and we provide the first class of
undirected graphs that act as suppressors of selection in the model of~\cite%
{Nowak05}, as the number of vertices increases. This is a very simple class
of graphs (called \emph{clique-wheels}), where each member $G_{n}$ has a clique of size $n\geq 3$ and an
induced cycle of the same size $n$ with a perfect matching between them. We
prove that, when the mutant is introduced to a clique vertex of $G_{n}$, then the
probability of fixation tends to zero as $n$ grows. Furthermore, we prove
that, when the mutant is introduced to a cycle vertex of $G_{n}$, then the probability
of fixation is at most $1-\frac{1}{r}$ as $n$ grows (i.e.~to the same value
with the homogeneous population of the Moran process). Therefore, since the
clique and the cycle have the same number $n$ of vertices in $G_{n}$, the
fixation probability $f_{G_{n}}$ of $G_{n}$ is at most $\frac{1}{2}(1-\frac{1}{r})$ as $n$ increases,
i.e.~$G_{n}$ is a suppressor of selection. Furthermore, we provide for the model
of~\cite{Nowak05} the first non-trivial upper and lower bounds for the
fixation probability in general undirected graphs. In particular, we first
provide a generic upper bound depending on the degrees of some local
neighborhood. Second, we present another upper and lower bound, depending on
the ratio between the minimum and the maximum degree of the vertices.

As our main contribution, we introduce in this paper the natural alternative 
of the \emph{all-or-nothing} approach of~\cite{Nowak05}, which can be 
interpreted as an \emph{aggregation} strategy. 
In this aggregation model, all individuals interact \emph{simultaneously} 
and the result is a compromise between the aggressive and non-aggressive individuals. 
Both these two alternative models for evolutionary dynamics 
coexist in several domains of interaction between individuals, 
e.g.~in society (dictatorship vs.~democracy, war vs.~negotiation) 
and biology (natural~selection~vs.~mutation~of species). 
In particular, another motivation for our models comes from biological networks, 
in which the interacting individuals (vertices) correspond to cells of an organ 
and advantageous mutants correspond to viral cells or cancer.
Regarding the proposed model of mutual influences, we first prove that it 
admits a \emph{potential} function. This potential function guarantees that
for any graph topology and any initial fitness vector, the system converges
to a stable state, where all individuals have the same fitness. 
%
Furthermore, we analyze the telescopic behavior of this model for the complete graph. 
In particular, we prove fast convergence to the stable state, as well as we provide 
almost tight bounds on the \emph{limit fitness} of the individuals.

Apart from being important on its own, this new evolutionary model enables 
also the abstract modeling of new control mechanisms over invading populations in networks. 
We demonstrate this by introducing and analyzing the behavior of two alternative control approaches. 
In both scenarios we periodically modify the fitness of a small fraction of individuals in the current population, 
which is arranged on a complete graph with $n$ vertices. 
%
In the first scenario, we proceed in phases. Namely, after each modification, 
we let the system stabilize before we perform the next modification. 
In the second scenario, we modify the fitness of a small fraction of individuals at each step. 
In both alternatives, we stop performing these modifications of the population whenever the fitness of every individual 
becomes sufficiently close to $1$ (which is considered to be the ``healthy'' state of the system). 
For the first scenario, we prove that the number of \emph{phases} needed for the system to stabilize in the healthy state 
is logarithmic in $r-1$ and independent of $n$. 
For the second scenario, we prove that the number of \emph{iterations} needed for the system to stabilize in the healthy state 
is linear in $n$ and proportional to $r \ln(r-1)$.

\vspace{0.2cm} \noindent \textbf{Notation.} In an undirected graph $G=(V,E)$, 
the edge between vertices $u\in V$ and $v\in V$ is denoted by $uv\in E$,
and in this case $u$ and $v$ are said to be \emph{adjacent} in $G$. If the
graph $G$ is directed, we denote by ${\langle uv\rangle }$ the arc from $u$
to $v$. For every vertex $u\in V$ in an undirected graph $G=(V,E)$, we
denote by $N(u)=\{v\in V\ |\ uv\in E\}$ the set of neighbors of $u$ in $G$
and by $\deg (u)=|N(u)|$. Furthermore, for any $k\geq 1$, we denote for
simplicity $[k]=\{1,2,\ldots ,k\}$.

\vspace{0.2cm} \noindent \textbf{Organization of the paper.} We discuss in
Section~\ref{sec:all-or-nothing-vs-aggregation} the two alternative models
for evolutionary dynamics on graphs. In particular, we formally present in
Section~\ref{Nowak-model-subsec} the model of~\cite{Nowak05} and then we
introduce in Section~\ref{mutual-model-subsec} our new model of mutual
influences. In Section~\ref{sequential-sec} we first provide generic upper
and lower bounds of the fixation probability in the model of~\cite{Nowak05}
for arbitrary undirected graphs. Then we present in Section~\ref%
{suppressor-subsec} the first class of undirected graphs which act as
suppressors of selection in the model of~\cite{Nowak05}, as the number of
vertices increases. In Section~\ref{parallel-model-sec} we analyze our new
evolutionary model of mutual influences. In particular, we first prove in
Section~\ref{potential-subsec} the convergence of the model by using a
potential function, and then we analyze in Section~\ref{clique-mutual-subsec} 
the telescopic behavior of this model for the case of a complete graph. 
In Section~\ref{antibiotics-sec} we demonstrate the use of our new model
in analyzing the behavior of two alternative invasion control mechanisms. Finally, we
discuss the presented results and further research in Section~\ref{conclusions}.

\section{All-or-nothing vs. aggregation\label%
{sec:all-or-nothing-vs-aggregation}}

In this section we formally define the model of~\cite{Nowak05} for
undirected graphs and we introduce our new model of mutual influences.
Similarly to~\cite{Nowak05}, we assume for every edge $uv$ of an undirected graph 
that $w_{uv}=\frac{1}{\deg u}$ and $w_{vu}=\frac{1}{\deg v}$,
i.e.~once a vertex $u$ has been chosen for reproduction, it chooses one of
its neighbors uniformly at random.

\subsection{The model of Lieberman, Hauert, and Nowak (an all-or-nothing
approach)\label{Nowak-model-subsec}}

Let $G=(V,E)$ be a connected undirected graph with $n$ vertices. Then, the
stochastic process defined in~\cite{Nowak05} can be described by a Markov
chain with state space $\mathcal{S}=2^{V}$ (i.e.~the set of all subsets of~$V$) and
transition probability matrix $P$, where for any two states $S_{1},S_{2}\subseteq V$,

\begin{equation}
P_{S_{1},S_{2}}=\left\{ 
\begin{array}{ll}
\vspace{0.1cm}\frac{1}{|S_{1}|r+n-|S_{1}|}\cdot \sum\limits_{u\in N(v)\cap
S_{1}}\frac{r}{\deg (u)}\text{,} & \quad \text{if $S_{2} = S_{1} \cup \{v\}$ 
and $v\notin S_{1}$} \\ 
\vspace{0.1cm}\frac{1}{|S_{1}|r+n-|S_{1}|}\cdot \sum\limits_{u\in
N(v)\setminus S_{2}}\frac{1}{\deg (u)}\text{,} & \quad \text{if $S_{1} = S_{2} \cup \{v\}$ 
and $v\notin S_{2}$} \\ 
\frac{1}{|S_{1}|r+n-|S_{1}|}\cdot \left( \sum\limits_{u\in S_{1}}\frac{%
r\cdot |N(u)\cap S_{1}|}{\deg (u)}+\sum\limits_{u\in V\setminus S_{1}}\frac{%
|N(u)\cap (V\setminus S_{1})|}{\deg (u)}\right) \text{,} & \quad \text{if }%
S_{2}=S_{1} \\ 
0\text{,} & \quad \text{otherwise}%
\end{array}%
\right.  \label{P-transition-Nowak}
\end{equation}

Notice that in the above Markov chain there are two absorbing states, namely 
$\emptyset $ and $V$, which describe the cases where the vertices of $G$ are
all white or all black, respectively. Since $G$ is connected, the above
Markov chain will eventually reach one of these two absorbing states with
probability~$1$. If we denote by $h_{v}$ the probability of absorption at
state $V$, given that we start with a single mutant placed initially on
vertex $v$, then by definition $f_{G}=\frac{\sum_{v}h_{v}}{n}$. Generalizing
this notation, let $h_{S}$ be the probability of absorption at $V$ given
that we start at state $S\subseteq V$, and let $h=[h_{S}]_{S\subseteq V}$.
Then, it follows that vector $h$ is the unique solution of the linear system 
$h=P\cdot h$ with boundary conditions $h_{\emptyset }=0$ and $h_{V}=1$.

However, observe that the state space $\mathcal{S}=2^{V}$ of this Markov
chain has size $2^{n}$, i.e.~the matrix $P=[P_{S_{1},S_{2}}]$ in (\ref%
{P-transition-Nowak}) has dimension $2^{n}\times 2^{n}$. This indicates that
the problem of computing the fixation probability $f_{G}$ of a given graph $%
G $ is hard, as also mentioned in~\cite{Nowak05}. This is the main reason
why, to the best of our knowledge, all known results so far regarding the\
computation of the fixation probability of undirected graphs are restricted
to regular graphs, stars, and paths~\cite%
{nowak06-book,Nowak05,Broom08,Broom-two-results,Broom10}. In particular, for
the case of regular graphs, the above Markov chain is equivalent to a
birth-death process with $n-1$ transient (non-absorbing) states, where the
forward bias at every state (i.e.~the ratio of the forward probability over
the backward probability) is equal to $r$. In this case, the fixation
probability is equal to 
\begin{equation}
\rho =\frac{1}{1+\sum_{i=1}^{n-1}\frac{1}{r^{i}}}=\frac{1-\frac{1}{r}}{1-%
\frac{1}{r^{n}}}  \label{rho-general}
\end{equation}%
cf.~\cite{nowak06-book}, chapter 8. It is worth mentioning that, even for
the case of paths, there is no known exact or approximate formula for the
fixation probability~\cite{Broom08}.

\subsection{An evolutionary model of mutual influences (an aggregation 
approach)\label{mutual-model-subsec}}

The evolutionary model of~\cite{Nowak05} constitutes a sequential process,
in every step of which only two individuals interact and the process
eventually reaches one of two extreme states. However, in many evolutionary
processes, all individuals may interact simultaneously at each time step,
while some individuals have greater influence to the rest of the population
than others. This observation leads naturally to the following model for
evolution on graphs, which can be thought as a smooth version of the
model presented in~\cite{Nowak05}.

Consider a population of size~$n$ and a portion $\alpha \in \lbrack 0,1]$ of
newly introduced mutants with relative fitness~$r$. The topology of the
population is given in general by a directed graph~$G=(V,E)$ with~$|V|=n$
vertices, where the directed arcs of~$E$ describe the allowed interactions
between the individuals. At each time step, \emph{every} individual~${u\in V}$
of the population influences every individual~${v\in V}$, for which~$\langle
uv\rangle \in E$, while the degree of this influence is proportional to the
fitness of~$u$ and to the weight~$w_{uv}$ of the arc~$\langle uv\rangle $.
Note that we can assume without loss of generality that the weights $w_{uv}$
on the arcs are normalized, i.e.~for every fixed vertex $u\in V$ it holds $%
\sum_{\langle uv\rangle \in E}w_{uv}=1$ . Although this model can be defined
in general for directed graphs with arbitrary arc weights $w_{uv}$, we will
focus in the following to the case where $G$ is an undirected graph (i.e.~$%
\langle u_{i}u_{j}\rangle \in E$ if and only if~${\langle u_{j}u_{i}\rangle
\in E}$, for every~$i,j$) and $w_{uv}=\frac{1}{\deg (u)}$ for all edges $%
uv\in E$.

Formally, let $V=\{u_{1},u_{2},\ldots ,u_{n}\}$ be the set of vertices and $%
r_{u_{i}}(k)$ be the fitness of the vertex $u_{i}\in V$ at iteration $k\geq
0 $. Let $\Sigma (k)$ denote the sum of the fitnesses of all vertices at
iteration $k$, i.e.~$\Sigma (k)=\sum_{i=1}^{n}r_{u_{i}}(k)$. Then the vector 
$r(k+1)$ with the fitnesses $r_{u_{i}}(k+1)$ of the vertices $u_{i}\in V$ at
the next iteration $k+1$ is given by
\begin{equation}
\lbrack r_{u_{1}}(k+1),r_{u_{2}}(k+1),\ldots ,r_{u_{n}}(k+1)]^{T}=P\cdot
\lbrack r_{u_{1}}(k),r_{u_{2}}(k),\ldots ,r_{u_{n}}(k)]^{T}
\label{recursion-P}
\end{equation}
i.e.%
\begin{equation}
r(k+1)=P\cdot r(k)  \label{recursion-P-compact}
\end{equation}%
In the latter equation,  the elements of the square matrix $P=[P_{ij}]_{i,j=1}^{n}$ depend on the iteration $k$ and they are given as follows:

\begin{equation}
P_{ij}=\left\{ 
\begin{array}{ll}
\vspace{0.1cm}\frac{r_{u_{j}}(k)}{\deg (u_{j})\Sigma (k)}\text{,} & \quad 
\text{if }i\neq j\text{ and }u_{i}u_{j}\in E \\ 
\vspace{0.1cm}0\text{,} & \quad \text{if }i\neq j\text{ and }%
u_{i}u_{j}\notin E \\ 
1-\sum_{j\neq i}P_{ij}\text{,} & \quad \text{if }i=j%
\end{array}%
\right.  \label{P-exact}
\end{equation}%
Note by (\ref{recursion-P-compact}) and (\ref{P-exact}) that after the first
iteration, the fitness of every individual in our new evolutionary model of
mutual influences equals the expected fitness of this individual in the
model of~\cite{Nowak05} (cf.~Section~\ref{Nowak-model-subsec}). However, this correlation of the two models is not
maintained in the next iterations and the two models behave differently as
the processes evolve.

In particular, in the case where $G$ is the complete graph, i.e.~$\deg
(u_{i})=n-1$ for every vertex~$u_{i}$, the matrix $P$ becomes

\begin{equation}
P=\left[ 
\begin{array}{cccc}
\vspace{0.2cm}1-\frac{r_{u_{2}}(k)+\ldots +r_{u_{n}}(k)}{(n-1)\Sigma (k)} & 
\frac{r_{u_{2}}(k)}{(n-1)\Sigma (k)} & \quad \cdots \quad  & \frac{%
r_{u_{n}}(k)}{(n-1)\Sigma (k)} \\ 
\vspace{0.2cm}\frac{r_{u_{1}}(k)}{(n-1)\Sigma (k)} & 1-\frac{%
r_{u_{1}}(k)+r_{u_{3}}(k)+\ldots +r_{u_{n}}(k)}{(n-1)\Sigma (k)} & \quad
\cdots \quad  & \frac{r_{u_{n}}(k)}{(n-1)\Sigma (k)} \\ 
\vspace{0.2cm}\cdots  & \cdots  & \quad \cdots \quad  & \cdots  \\ 
\frac{r_{u_{1}}(k)}{(n-1)\Sigma (k)} & \frac{r_{u_{2}}(k)}{(n-1)\Sigma (k)}
& \quad \cdots \quad  & 1-\frac{r_{u_{1}}(k)+\ldots +r_{u_{n-1}}(k)}{%
(n-1)\Sigma (k)}%
\end{array}%
\right]   \label{P-clique}
\end{equation}%
The system given by (\ref{recursion-P-compact}) and (\ref{P-exact}) can be
defined for every initial fitness vector $r(0)$. However, in the case where there is initially a portion $\alpha \in \lbrack 0,1]$ of newly
introduced mutants with relative fitness $r$, the initial condition $r(0)$
of the system in (\ref{recursion-P}) is a vector with $\alpha n$ entries
equal to $r$ and with $(1-\alpha )n$ entries equal to $1$.

\begin{observation}
\label{quadratic-obs} Note that the recursive equation~(\ref%
{recursion-P-compact}) is a \emph{non-linear} equation on the fitness values~%
$r_{u_{j}}(k)$ of the vertices at iteration $k$.
\end{observation}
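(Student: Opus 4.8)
The statement is elementary, and the plan is simply to expand one coordinate of the update~(\ref{recursion-P-compact}) and read off the degree of the resulting expression in the variables $r_{u_{1}}(k),\dots,r_{u_{n}}(k)$.

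First I would substitute the entries of $P$ from~(\ref{P-exact}) into $r_{u_{i}}(k+1)=\sum_{j=1}^{n}P_{ij}\,r_{u_{j}}(k)$ and use $P_{ii}=1-\sum_{j\neq i}P_{ij}$ to eliminate the diagonal term. The off-diagonal contributions then regroup into
\begin{equation*}
r_{u_{i}}(k+1)=r_{u_{i}}(k)+\frac{1}{\Sigma(k)}\sum_{j\,:\,u_{i}u_{j}\in E}\frac{r_{u_{j}}(k)\bigl(r_{u_{j}}(k)-r_{u_{i}}(k)\bigr)}{\deg(u_{j})} .
\end{equation*}
The point to extract is that, after clearing the common denominator $\Sigma(k)=\sum_{\ell}r_{u_{\ell}}(k)$, the right-hand side is a polynomial of degree exactly two in $r(k)$: for every neighbour $u_{j}$ of $u_{i}$ it contains the pure square $r_{u_{j}}(k)^{2}$ with the nonzero coefficient $1/\deg(u_{j})$, and no other term can cancel these squares.

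To turn this into a clean certificate of non-linearity I would specialise to the smallest case $G=K_{2}$, where the formula collapses to $r_{u_{1}}(k+1)=r_{u_{2}}(k+1)=\bigl(r_{u_{1}}(k)^{2}+r_{u_{2}}(k)^{2}\bigr)/\bigl(r_{u_{1}}(k)+r_{u_{2}}(k)\bigr)$. If this were a linear (or affine) function $a\,r_{u_{1}}(k)+b\,r_{u_{2}}(k)+c$ of the fitness vector, then multiplying by $r_{u_{1}}(k)+r_{u_{2}}(k)$ and comparing the coefficients of $r_{u_{1}}(k)^{2}$ and of $r_{u_{2}}(k)^{2}$ forces $a=b=1$, while the coefficient of $r_{u_{1}}(k)\,r_{u_{2}}(k)$ forces $a+b=0$ --- a contradiction; the same coefficient comparison works verbatim for an arbitrary connected $G$ (picking any vertex $u_{i}$ with a neighbour $u_{j}$). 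Hence the recursion is non-linear in the $r_{u_{j}}(k)$, as claimed.

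Since the whole argument amounts to a one-line expansion followed by inspection of a degree-two polynomial, there is essentially no obstacle. The only care needed is the bookkeeping in the regrouping step and the (harmless, always valid) assumption $\Sigma(k)>0$, which holds throughout the process because all fitnesses remain strictly positive.
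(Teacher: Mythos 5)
Your proposal is correct and follows essentially the paper's own (implicit) justification: the observation is immediate because the entries of $P$ in (\ref{P-exact}) themselves depend on the fitnesses, and the coordinate expansion you write down is exactly the one the paper derives at the start of the proof of Theorem~\ref{undir-conv-thm}, exhibiting the quadratic terms $r_{u_{j}}(k)\bigl(r_{u_{j}}(k)-r_{u_{i}}(k)\bigr)$ divided by $\Sigma(k)$. Your additional $K_{2}$ coefficient-comparison certificate is a harmless (and valid) extra that the paper does not bother to include.
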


Since by (\ref{P-exact}) the sum of every row of the matrix $P$ equals to one, 
the fitness $r_{u_{i}}(k)$ of vertex~$u_{i}$ after the ${(k+1)}$-th
iteration of the process is a convex combination of the fitnesses of the
neighbors of $u_{i}$ after the $k$-th iteration. Therefore, in particular,
the fitness of every vertex $u_{i}$ at every iteration~${k\geq 0}$ lies
between the smallest and the greatest initial fitness of the vertices, as
the next observation states.

\begin{observation}
\label{convex-bound-parallel-obs} Let $r_{\min }$ and $r_{\max }$ be the
smallest and the greatest initial fitness in $r(0)$, respectively. 
Then $r_{\min }\leq r_{u_{i}}(k)\leq r_{\max }$ for every $u_{i}\in V$ and every $k\geq 0$.
\end{observation}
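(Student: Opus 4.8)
The plan is to observe that the matrix $P=P(k)$ defined in~(\ref{P-exact}) is, at every iteration $k$, a (row\nobreakdash-)stochastic matrix, and then to conclude by a straightforward induction on $k$, using the elementary fact that a stochastic matrix maps a vector into the convex hull of its entries (a point already noted informally in the paragraph preceding the observation).

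First I would check that $P$ is well defined and stochastic. By construction every row of $P$ sums to $1$, since the diagonal entry is set to $P_{ii}=1-\sum_{j\neq i}P_{ij}$. The off\nobreakdash-diagonal entries $P_{ij}=\frac{r_{u_j}(k)}{\deg(u_j)\Sigma(k)}$ are nonnegative as long as all fitnesses $r_{u_j}(k)$ are nonnegative and $\Sigma(k)>0$; both hold inductively, because $r(0)$ has only positive entries (each equal to $1$ or to $r\geq 1$) and, as the induction will show, the process keeps all fitnesses in $[r_{\min},r_{\max}]$ with $r_{\min}\geq 1>0$. It then remains to verify $P_{ii}\geq 0$, i.e.\ $\sum_{j\neq i}P_{ij}\leq 1$. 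Using $\deg(u_j)\geq 1$ for every $j$ and that only the neighbors of $u_i$ contribute,
\[
\sum_{j\neq i}P_{ij}=\sum_{u_j\in N(u_i)}\frac{r_{u_j}(k)}{\deg(u_j)\,\Sigma(k)}\leq\sum_{u_j\in N(u_i)}\frac{r_{u_j}(k)}{\Sigma(k)}\leq\frac{\sum_{j=1}^{n}r_{u_j}(k)}{\Sigma(k)}=1 ,
\]
so indeed $P_{ii}\geq 0$ and $P$ is stochastic.

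Then I would run the induction on $k$. For $k=0$ the claim holds by the definition of $r_{\min}$ and $r_{\max}$. Assuming $r_{\min}\leq r_{u_j}(k)\leq r_{\max}$ for all $j$, equation~(\ref{recursion-P-compact}) gives $r_{u_i}(k+1)=\sum_{j=1}^{n}P_{ij}\,r_{u_j}(k)$, which is a convex combination of the values $r_{u_1}(k),\ldots,r_{u_n}(k)$ because the coefficients $P_{ij}$ are nonnegative and sum to $1$. Hence $r_{\min}\leq\min_j r_{u_j}(k)\leq r_{u_i}(k+1)\leq\max_j r_{u_j}(k)\leq r_{\max}$, which closes the induction. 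In particular $\Sigma(k+1)\geq n\,r_{\min}>0$, which is exactly what is needed to keep the entries of $P$ well defined at the next iteration.

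The only point that requires any care — and the reason this is stated as an observation rather than being completely immediate — is confirming that $P$ genuinely is stochastic at every step: nonnegativity of the off\nobreakdash-diagonal entries is transparent, but nonnegativity of the diagonal entries relies on the normalization built into the choice $w_{uv}=\frac{1}{\deg(u_j)}$ together with $\deg(u_j)\geq 1$, as in the displayed estimate above. Everything else is just the standard fact that convex combinations remain within the range of their inputs, propagated by induction.
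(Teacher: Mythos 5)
Your proof is correct and follows essentially the same route as the paper, which justifies the observation exactly by noting that each row of $P$ sums to $1$, so that $r_{u_i}(k+1)$ is a convex combination of the fitnesses at step $k$, and the bound propagates by induction. The only difference is that you additionally verify nonnegativity of the diagonal entries of $P$ (so that the combination is genuinely convex), a point the paper leaves implicit; this is a welcome bit of extra rigor but not a different argument.
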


\paragraph{Degree of influence.}

Suppose that initially $\alpha n$ mutants (for some ${\alpha \in \lbrack 0,1]%
}$) with relative fitness~${r\geq 1}$ are introduced in graph $G$ on a
subset ${S\subseteq V}$ of its vertices. Then, as we prove in Theorem~\ref%
{undir-conv-thm}, after a certain number of iterations the fitness vector $%
r(k)$ converges to a vector $[r_{0}^{S},r_{0}^{S},\ldots ,r_{0}^{S}]^{T}$,
for some value $r_{0}^{S}$. This \emph{limit fitness} $r_{0}^{S}$ depends in
general on the initial relative fitness $r$ of the mutants, on their initial
number $\alpha n$, as well as on their initial position on the vertices of $%
S\subseteq V$. The relative fitness $r$ of the initially introduced mutants
can be thought as having the ``black'' color, while the initial fitness of
all the other vertices can be thought as having the ``white'' color. Then,
the limit fitness $r_{0}^{S}$ can be thought as the ``degree of gray color''
that all the vertices obtain after sufficiently many iterations, given that
the mutants are initially placed at the vertices of~$S$. In the case where
the $\alpha n$ mutants are initially placed with \emph{uniform} probability
to the vertices of~$G$, we can define the \emph{limit fitness $r_{0}$ of~$G$} as

\begin{equation}
r_{0}=\frac{\sum\limits_{S\subseteq V,\ |S|=\alpha n}r_{0}^{S}}{{\binom{n}{\alpha n}}}
\end{equation}%
For a given initial value of $r$, the bigger is $r_{0}$ the stronger is the
effect of natural selection in $G$.

Since $r_{0}^{S}$ is a convex combination of $r$ and $1$, there exists a
value $f_{G,S}(r)\in \lbrack 0,1]$, such that ${r_{0}^{S}=f_{G,S}(r)\cdot r+(1-f_{G,S}(r))\cdot 1}$. 
Then, the value $f_{G,S}(r)$ is the \emph{degree of influence} of the graph~$G$, 
given that the mutants are initially placed at the vertices of $S$. In
the case where the mutants are initially placed with uniform probability at
the vertices of $G$, we can define the degree of influence of $G$ as\vspace{-0.3cm}

\begin{equation}
f_{G}(r)=\frac{\sum\limits_{S\subseteq V,\ |S|=\alpha n}f_{G,S}(r)}{{\binom{n%
}{\alpha n}}}
\end{equation}

\paragraph{Number of iterations to stability.}

For some graphs $G$, the fitness vector $r(k)$ reaches \emph{exactly} the 
\emph{limit fitness vector} $[r_{0},r_{0},\ldots ,r_{0}]^{T}$ (for instance,
the complete graph with two vertices and one mutant not only reaches this
limit in exactly one iteration, but also the degree of influence is exactly
the fixation probability of this simple graph). However, for other graphs $G$
the fitness vector $r(k)$ converges to $[r_{0},r_{0},\ldots ,r_{0}]^{T}$
(cf.~Theorem~\ref{undir-conv-thm} below), but it never becomes equal to it.
In the first case, one can compute (exactly or approximately) the number of
iterations needed to reach the limit fitness vector. 
In the second case, given an arbitrary 
$\varepsilon >0$, one can compute the number of
iterations needed to come $\varepsilon $-close to the limit fitness vector.

\section{Analysis of the all-or-nothing model\label{sequential-sec}}

In this section we present analytic results on the evolutionary model of 
\cite{Nowak05}, which is based on the sequential interaction among the
individuals. In particular, we first present non-trivial upper and lower
bounds for the fixation probability, depending on the degrees of vertices. 
Then we present the first class of
undirected graphs that act as suppressors of selection in the model of~\cite%
{Nowak05}, as the number of vertices increases.

Recall by the preamble of Section~\ref{mutual-model-subsec} that, similarly
to~\cite{Nowak05}, we assumed that $w_{uv}=\frac{1}{\deg u}$ and $w_{vu}=%
\frac{1}{\deg v}$ for every edge $uv$ of an undirected graph $G=(V,E)$. It
is easy to see that this formulation is equivalent to assigning to every
edge $e=uv\in E$ the weight $w_{e}=w_{uv}=w_{vu}=1$, since also in this
case, once a vertex $u$ has been chosen for reproduction, it chooses one of
its neighbors uniformly at random. A natural generalization of this weight
assignment is to consider~$G$ as a complete graph, where every edge $e$ in
the clique is assigned a non-negative weight $w_{e}\geq 0$, and $w_{e}$ is
not necessarily an integer. Note that, whenever $w_{e}=0$, it is as if the
edge $e$ is not present in $G$. Then, once a vertex $u$ has been chosen for
reproduction, $u$ chooses any other vertex~$v$ with probability $\frac{w_{uv}%
}{\sum_{x\neq u}w_{ux}}$.

Note that, if we do not impose any additional constraint on the weights, we
can simulate multigraphs by just setting the weight of an edge to be equal to the
multiplicity of this edge. Furthermore, we can construct graphs with
arbitrary small fixation probability. For instance, consider an undirected
star with $n$ leaves, where one of the edges has weight an arbitrary small $%
\varepsilon >0$ and all the other edges have weight $1$. Then, the leaf that
is incident to the edge with weight $\varepsilon $ acts as a source in the
graph as $\varepsilon \rightarrow 0$. Thus, the only chance to reach
fixation is when we initially place the mutant at the source, i.e.~the
fixation probability of this graph tends to $\frac{1}{n+1}$ as $\varepsilon
\rightarrow 0$. Therefore, it seems that the difficulty to construct strong
suppressors lies in the fact that unweighted undirected graphs can not
simulate sources. For this reason, we consider in the remainder of this
paper only unweighted undirected graphs.

\subsection{A generic upper bound approach\label{generic-upper-Nowak-subsec}}

In the next theorem we provide a generic upper bound of the fixation
probability of undirected graphs, depending on the degrees of the vertices
in some local neighborhood.

\begin{theorem}
\label{theoremwithQ}Let $G=(V,E)$ be an undirected graph. For any $uv\in E$,
let $Q_{u}=\sum_{x\in N(u)}\frac{1}{\deg {x}}$ and $Q_{uv}=\sum_{x\in
N(u)\backslash \{v\}}\frac{1}{\deg {x}}+\sum_{x\in N(v)\backslash \{u\}}%
\frac{1}{\deg {x}}$. Then%
\begin{equation}
f_{G}\leq \max_{uv\in E}\left\{ \frac{r^{2}}{r^{2}+rQ_{u}+\frac{Q_{u}Q_{uv}}{%
2}}\right\}   \label{generic-upper}
\end{equation}
\end{theorem}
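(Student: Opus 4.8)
The goal is to bound $f_G = \frac{1}{n}\sum_v h_v$ from above, where $h_v$ is the fixation probability starting from a single mutant on $v$. The natural approach is to pick a favorable edge $uv$ and overestimate $h_u$ and $h_v$ simultaneously, since these two are the largest contributors (a mutant on a high-"conductance" vertex is most likely to fixate). Concretely, I would try to show that for \emph{every} edge $uv\in E$,
\[
\frac{h_u + h_v}{2} \;\le\; \frac{r^2}{r^2 + rQ_u + \tfrac{Q_uQ_{uv}}{2}},
\]
and then, since $f_G$ is an average of the $h_v$'s, bound $f_G$ by the maximum over edges of the right-hand side. The reason to group $u$ and $v$ is that the very first step from the single-mutant state $\{u\}$ most likely either moves to $\{u,v\}$ or back to $\emptyset$; so a two-step analysis around the pair $\{u,v\}$ should capture the dominant escape route.

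\textbf{Key steps.} First, from state $\{u\}$ the mutant fixates only if it survives, so $h_u \le \Pr[\text{first step increases the mutant set}] \cdot (\text{best continuation})$; the probability of spreading from $\{u\}$ to some $\{u,w\}$ is $\frac{r}{r+n-1}\cdot \frac{1}{\deg u}\cdot |N(u)|/\dots$ — more precisely $\frac{r}{r+n-1}$ times $\sum_{w\in N(u)}\frac{1}{\deg u}\cdot(\text{prob. }w\text{ is chosen})$, and crucially the backward step from $\{u\}$ to $\emptyset$ has probability proportional to $\sum_{x\in N(u)}\frac1{\deg x}=Q_u$ (each white neighbor $x$ of $u$ can be chosen with probability $\frac1n$ and then replaces $u$ with probability $\frac1{\deg x}$). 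So from $\{u\}$ the ratio of "die immediately" to "advance" is governed by $Q_u$ versus $r$. Second, I would do the same one level deeper: from a two-mutant state $\{u,v\}$, estimate the chance of collapsing back, which brings in $Q_{uv}$ (the sum of $\frac1{\deg x}$ over neighbors of $u$ or $v$ other than each other — exactly the white vertices that can kill a mutant). Third, assemble these into a recursive inequality: writing $h_u$ in terms of a gambler's-ruin-type bound on the pair, one gets a product of the form $\frac{r}{r+Q_u}\cdot\frac{r}{r+Q_{uv}/2}$ or similar, and clearing denominators yields exactly $\frac{r^2}{r^2+rQ_u+Q_uQ_{uv}/2}$.

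\textbf{The technical heart.} The delicate part is making the "best continuation" bound rigorous: after the mutant set reaches size $2$ or more, we cannot simply ignore what happens, so we need a clean way to say "the conditional fixation probability given we've advanced is at most $1$" is too lossy, while a full recursion over all states is intractable. The trick I expect the authors use is a \emph{coupling or monotonicity argument with a birth–death chain}: collapse the $2^n$-state chain onto a small chain tracking only whether the mutant population is $0$, or "contains $u$ and/or $v$", or "has escaped past the local neighborhood", and show the escape/death odds in this reduced chain are dominated by the quantities $r, Q_u, Q_{uv}$. The main obstacle is controlling the first two transitions tightly enough — in particular showing that from $\{u\}$ the probability of going extinct is at least $\frac{Q_u}{\ldots}$ relative to advancing, and from $\{u,v\}$ the analogous statement with $Q_{uv}$, since these white-neighbor degree sums are precisely what make low-degree neighbors dangerous (a neighbor $x$ with small $\deg x$ is very likely to overwrite $u$). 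Once those two local estimates are in hand, the algebra combining them into $(\ref{generic-upper})$ is routine; and averaging over $v$ (noting $h_v\le \max_{vw\in E}(\cdots)$ for each $v$ via a suitable edge) gives the stated bound on $f_G$.
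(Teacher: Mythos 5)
Your local mechanics are exactly those of the paper: the relaxed chain with states ``extinct'', $\{u\}$, $\{u,v\}$, and ``escaped = win'' (the paper declares fixation once three vertices are black), with extinction-vs-advance odds $Q_u:r$ from one black and $Q_{uv}:r\bigl(2-\tfrac{1}{\deg u}-\tfrac{1}{\deg v}\bigr)$ from two blacks, and the small recursion $\tilde h_u=p_1(p_2+q_2\tilde h_u)$ (not the plain product $\frac{r}{r+Q_u}\cdot\frac{r}{r+Q_{uv}/2}$, which has the wrong denominator) giving $\frac{r^2}{r^2+rQ_u+Q_uQ_{uv}/(2-\frac{1}{\deg u}-\frac{1}{\deg v})}\le\frac{r^2}{r^2+rQ_u+Q_uQ_{uv}/2}$. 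So the engine of the proof is the same.

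The genuine gap is in how you pass from this local estimate to a bound on $f_G$, and in what makes the relaxation a valid domination. The paper's key move is to take $u\in\arg\max_v h_v$; this does double duty. First, it trivially gives $f_G=\frac1n\sum_v h_v\le h_u$, so one only needs the bound at that single vertex, and the theorem's $\max_{uv\in E}$ then comes for free. Second, and more importantly, the relaxed chain's rule ``after any backward step from two blacks, return to state $\{u\}$'' is an over-estimate \emph{only because} $u$ maximizes $h$: the true chain may land on $\{w\}$ with $w\neq u$, and replacing that by $\{u\}$ is favorable to fixation precisely when $h_u\ge h_w$. Your two substitutes do not work as stated: the pairwise claim $\frac{h_u+h_v}{2}\le B_{uv}$ for every edge does not imply $f_G\le\max_{uv\in E}B_{uv}$ (e.g.\ in a star the center is counted in every pair, and the averaging loses a factor close to $2$), and the per-vertex claim $h_v\le\max_{vw\in E}B_{vw}$ for \emph{every} $v$ is stronger than what the domination argument yields, since for a non-maximal $v$ the ``backtrack to $v$'' step is not favorable. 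Relatedly, you gloss over the fact that from $\{u\}$ the first advance may create $\{u,w\}$ for any neighbor $w$, not your chosen $v$; the paper handles this by letting the relaxed chain always move to the neighbor $v$ maximizing the forward bias of $\{u,v\}$, another domination step that should be stated explicitly. With the argmax choice of $u$ and these two explicit dominations added, your outline becomes the paper's proof.
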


\begin{proof}
For the proof we construct a simple Markov chain $\tilde{\mathcal{M}}$, in
which the probability of reaching one of its absorbing states is at least
the probability of fixation in the original Markov chain. Then, in order to
provide an upper bound of the fixation probability in the original Markov
chain, we provide an upper bound on the probability of reaching one of the
absorbing states in $\tilde{\mathcal{M}}$.

Let $u$ be a vertex that maximizes the probability of fixation, that is, $u
\in \arg \max_{u\in V}h_{u}$. Furthermore, assume that we end the process in
favor of the black vertices when the corresponding Markov chain describing
the model of~\cite{Nowak05} reaches three black vertices. To favor fixation
even more, since $u$ maximizes $h_{u}$, we assume that, whenever we reach
two black vertices and a backward step happens (i.e.,~a step that reduces
the number of black vertices), then we backtrack to state $u$ (even if
vertex $u$ was the one that became white). Finally, given that we start at
vertex $u$ and we increase the number of black vertices by one, we assume
that we make black the neighbor $v$ of $u$ that maximizes the forward bias
of the state $\{u,v\} $. Imposing these constraints (and eliminating self
loops), we get a Markov chain $\tilde{\mathcal{M}}$, shown in Figure~\ref%
{upper3bound}, that dominates the original Markov chain. That is, the
probability that $\tilde{\mathcal{M}}$ reaches the state of three black
vertices, given that we start at $u$, is an upper bound of the fixation
probability of $G$.

\begin{figure}[htb]
\centering
\includegraphics[scale=0.695]{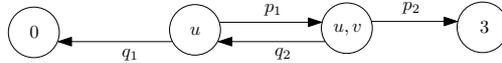}
\caption{The Markov chain $\tilde{\mathcal{M}}$.}
\label{upper3bound}
\end{figure}
For the Markov chain $\tilde{\mathcal{M}}$, we have that

\begin{equation*}
q_{1}=\frac{\sum_{x\in N(u)}\frac{1}{\deg {x}}}{r+\sum_{x\in N(u)}\frac{1}{%
\deg {x}}}\overset{def}{=}\frac{Q_{u}}{r+Q_{u}}=1-p_{1}
\end{equation*}%
where $N(u)$ is the set of neighbors of $u$. Also,

\begin{eqnarray}
q_{2} &=&\frac{\sum_{x\in N(u)\backslash \{v\}}\frac{1}{\deg {x}}+\sum_{x\in
N(v)\backslash \{u\}}\frac{1}{\deg {x}}}{r(1-\frac{1}{\deg u})+r(1-\frac{1}{%
\deg v})+\sum_{x\in N(u)\backslash \{v\}}\frac{1}{\deg {x}}+\sum_{x\in
N(v)\backslash \{u\}}\frac{1}{\deg {x}}}  \notag \\
&\overset{def}{=}&\frac{Q_{uv}}{r(2-\frac{1}{\deg u}-\frac{1}{\deg v})+Q_{uv}%
}=1-p_{2}  \notag
\end{eqnarray}%
Let now $\tilde{h}_{u}$ (resp.~$\tilde{h}_{uv}$) denote the probability of
reaching 3 blacks, starting from $u$ (resp.~starting from the state $\{u,v\}$%
) in $\tilde{\mathcal{M}}$. We have that 
\begin{eqnarray*}
\tilde{h}_{u} &=&p_{1}\tilde{h}_{uv}=p_{1}(p_{2}+q_{2}\tilde{h}%
_{u})\Leftrightarrow  \\
\tilde{h}_{u} &=&\frac{p_{1}p_{2}}{1-p_{1}q_{2}}=\frac{r^{2}}{r^{2}+rQ_{u}+%
\frac{Q_{u}Q_{uv}}{2-\frac{1}{\deg u}-\frac{1}{\deg v}}}\leq \frac{r^{2}}{%
r^{2}+rQ_{u}+\frac{Q_{u}Q_{uv}}{2}}
\end{eqnarray*}%
This completes the proof of the theorem.
\end{proof}

\medskip

Consider for instance a bipartite graph $G=(U,V,E)$, where $\deg u=d_{1}$
for every vertex $u\in U$ and $\deg v=d_{2}$ for every vertex $v\in V$. Then
any edge of $E$ has one vertex in $U$ and one vertex in~$V$. Using the above
notation, consider now an arbitrary edge $uv\in E$, where $u\in U$ and $v\in
V$. Then $Q_{u}=\frac{d_{1}}{d_{2}}$ and $Q_{uv}=\frac{d_{1}-1}{d_{2}}+\frac{%
d_{2}-1}{d_{1}}$. The right side of (\ref{generic-upper})
is maximized when $d_{1}<d_{2}$, and thus in this case Theorem~\ref%
{theoremwithQ} implies that $f_{G}\leq \frac{r^{2}}{r^{2}+r\frac{d_{1}}{d_{2}%
}+\frac{d_{1}}{d_{2}}\left( \frac{d_{1}-1}{d_{2}}+\frac{d_{2}-1}{d_{1}}%
\right) }$. In particular, for the star graph with~${n+1}$ vertices, we have $%
d_{1}=1$ and $d_{2}=n$. But, as shown in~\cite{Nowak05}, the fixation
probability of the star is asymptotically equal to $1-\frac{1}{r^{2}}$,
whereas the above bound gives $f_{star}\leq \frac{r^{2}}{r^{2}+r\frac{1}{n}+%
\frac{n-1}{n}}=1-\frac{1}{r^{2}+1+o(1)}$, which is quite tight.

\subsection{Upper and lower bounds depending on degrees\label%
{delta-Delta-upper-lower-subsec}}

In the following theorem we provide upper and lower bounds of the fixation
probability of undirected graphs, depending on the minimum and the maximum
degree of the vertices.

\begin{theorem}
\label{worstcasetheorem}Let $G=(V,E)$ be an undirected graph, where $\frac{%
\deg (v)}{\deg (u)}\leq \lambda $ whenever $uv\in E$. Then, the fixation
probability $f_{G}$ of $G$, when the fitness of the mutant is $r$, is upper
(resp.~lower) bounded by the fixation probability of the clique for mutant
fitness $r_{1}=r\lambda $ (resp.~for mutant fitness $r_{2}=\frac{r}{\lambda }
$). That is,%
\begin{equation}
\frac{1-\frac{\lambda}{r}}{1-(\frac{\lambda}{r})^{n}}
\leq f_{G}\leq 
\frac{1-\frac{1}{r\lambda }}{1-(\frac{1}{r\lambda})^{n}}
\end{equation}
\end{theorem}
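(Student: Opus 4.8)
The statement compares the fixation probability of $G$ with that of a clique, so the natural approach is a coupling argument between the Markov chain $\mathcal{M}_G$ describing the model of~\cite{Nowak05} on $G$ and the corresponding birth-death process on the clique $K_n$ (for which the fixation probability is given in closed form by~(\ref{rho-general})). The key structural observation is that $\mathcal{M}_G$ is \emph{not} in general a birth-death chain — its state space is $2^V$ — but the quantity that governs whether the mutant population grows or shrinks is, at each state $S$, the ratio of the total ``forward'' rate $\sum_{v\notin S}\sum_{u\in N(v)\cap S}\frac{r}{\deg u}$ to the total ``backward'' rate $\sum_{v\in S}\sum_{u\in N(v)\setminus S}\frac{1}{\deg u}$. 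Writing $E(S,\bar S)$ for the set of edges between $S$ and its complement, the forward rate is $r\sum_{uv\in E(S,\bar S),\,u\in S}\frac{1}{\deg u}$ and the backward rate is $\sum_{uv\in E(S,\bar S),\,u\in S}\frac{1}{\deg v}$. Each edge $uv$ of the cut contributes $\frac{r}{\deg u}$ forward and $\frac{1}{\deg v}$ backward; since $uv\in E$ forces $\frac{\deg v}{\deg u}\le \lambda$, i.e. $\frac{1}{\deg v}\ge \frac{1}{\lambda \deg u}$, the per-edge forward/backward ratio is at most $r\lambda$ and, by the symmetric bound $\frac{\deg u}{\deg v}\le \lambda$, at least $r/\lambda$. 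Summing over the cut, the \emph{aggregate} forward bias out of every non-absorbing state $S$ lies in $[r/\lambda,\ r\lambda]$.

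The plan is then to collapse $\mathcal{M}_G$ onto a birth-death chain on $\{0,1,\dots,n\}$ tracking only $|S|$, and to dominate it by two birth-death chains with constant forward bias $r\lambda$ (from above) and $r/\lambda$ (from below). Concretely, for the upper bound I would build a coupling with a birth-death process on $\{0,\dots,n\}$ in which, conditioned on a transition that changes $|S|$, the probability of moving up is $\frac{r\lambda}{1+r\lambda}$ at every interior state; the per-edge bias computation above shows that from any $S$ with $|S|=i$ the true chain moves up with conditional probability at most $\frac{r\lambda}{1+r\lambda}$, so a standard monotone coupling gives that $\mathcal{M}_G$ reaches $V$ no more often than the dominating chain reaches state $n$. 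The latter is exactly a birth-death chain with constant forward bias $r\lambda$, whose absorption probability at $n$ starting from one mutant is $\frac{1-\frac{1}{r\lambda}}{1-(\frac{1}{r\lambda})^{n}}$ by~(\ref{rho-general}). The lower bound is the mirror image with bias $r/\lambda$. Since we start from a uniformly random single mutant in each case and the comparison holds vertex-by-vertex (indeed state-by-state), it carries over to $f_G=\frac{1}{n}\sum_v h_v$.

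A point requiring a little care is that $\mathcal{M}_G$ can make ``lazy'' transitions ($S_2=S_1$); these must be factored out, since only the conditional law given a \emph{change} of $|S|$ is what one compares. Removing self-loops does not change absorption probabilities, so this is harmless, but it should be stated explicitly. A second subtlety: the dominating/dominated chains have constant bias at every interior level, whereas the true chain's bias depends on the current cut; the argument needs that the \emph{worst case over all states at a given level} still lies in $[r/\lambda, r\lambda]$, which is precisely the content of the per-edge estimate — so the bound is uniform in $S$, not merely in $|S|$, and the coupling goes through at the level of $|S|$.

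The main obstacle is making the stochastic-domination step fully rigorous: $\mathcal{M}_G$ is a chain on $2^V$, not on $\{0,\dots,n\}$, and its ``backward'' moves from a state $S$ can lead to different sets of the same or smaller size with history-dependent future, so one cannot literally project it to a birth-death chain. The clean way around this is to couple the event \{fixation of $\mathcal{M}_G$\} with the event \{the dominating birth-death chain hits $n$\} by running them together step-by-step, at each step using that the conditional up-probability in $\mathcal{M}_G$ is dominated by $\frac{r\lambda}{1+r\lambda}$ regardless of which set $S$ we currently occupy; a routine induction on the coupled trajectory then shows $|S_t|$ is stochastically dominated by the birth-death position at all times, hence so is the absorption-at-top event. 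Once this comparison lemma is in place, the theorem follows by plugging into~(\ref{rho-general}).
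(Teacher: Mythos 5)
Your proposal is correct and follows essentially the same route as the paper: the paper also writes $\rho_{+}(S)$ and $\rho_{-}(S)$ as sums over the cut edges, uses the degree-ratio hypothesis edge-by-edge to get $\frac{r}{\lambda}\leq\frac{\rho_{+}(S)}{\rho_{-}(S)}\leq r\lambda$ uniformly over all states $S$, and then sandwiches the process between birth-death chains with constant forward bias $r/\lambda$ and $r\lambda$, concluding via equation~(\ref{rho-general}). Your additional care about eliminating lazy steps and making the stochastic domination/coupling explicit is a finer-grained account of the same argument, which the paper states more tersely.
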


\begin{proof}
For an arbitrary state $S\subseteq V$ of the Markov Chain (that corresponds
to the set of black vertices in that state), let $\rho _{+}(S)$ (resp.~$\rho
_{-}(S)$) denote the probability that the number of black vertices increases
(resp.~decreases). In the case where $G$ is the clique, the forward bias at
state $S$ is equal to $\frac{\rho _{+}(S)}{\rho _{-}(S)}=r$, for every state 
$S$~\cite{nowak06-book,Nowak05}. Let $C_{S}=\{uv\in E\ |\ u\in S,\ v\notin
S\}$ be the set of edges with one vertex in $S$ and one vertex in $%
V\setminus S$. Then,%
\begin{equation}
\rho _{+}(S)=\sum_{\{uv\in E\ |\ u\in S,\ v\notin S\}}\frac{r}{n-|S|+r|S|}%
\frac{1}{\deg (u)}  \label{rho-plus-1}
\end{equation}%
and 
\begin{equation}
\rho _{-}(S)=\sum_{\{uv\in E\ |\ u\in S,\ v\notin S\}}\frac{1}{n-|S|+r|S|}%
\frac{1}{\deg (v)}  \label{rho-minus-1}
\end{equation}

Now, since by assumption $\frac{\deg (v)}{\deg (u)}\leq \lambda $ whenever $%
uv\in E$, it follows that%
\begin{equation}
\frac{1}{\lambda }\cdot \sum_{\{uv\in E\ |\ u\in S,\ v\notin S\}}\frac{1}{%
\deg (v)}\leq \sum_{\{uv\in E\ |\ u\in S,\ v\notin S\}}\frac{1}{\deg (u)}%
\leq \lambda \cdot \sum_{\{uv\in E\ |\ u\in S,\ v\notin S\}}\frac{1}{\deg (v)%
}  \label{rho-ineq}
\end{equation}

By (\ref{rho-plus-1}), (\ref{rho-minus-1}), and (\ref{rho-ineq}) we get the
following upper and lower bounds for the forward bias at state $S$.%
\begin{equation}
\frac{r}{\lambda }\leq \frac{\rho _{+}(S)}{\rho _{-}(S)}\leq r\lambda 
\label{rho-3}
\end{equation}%
Notice that the upper and lower bounds of (\ref{rho-3}) for the forward bias
at state $S$ are independent of $S$. Therefore, the process stochastically
dominates a birth-death process with forward bias $\frac{r}{\lambda }$,
while it is stochastically dominated by a birth-death process with forward
bias $r\lambda $ (cf.~equation~(\ref{rho-general})). This completes the
proof of the theorem.
\end{proof}

\subsection{The undirected suppressor\label{suppressor-subsec}}

In this section we provide the first class of undirected graphs (which we
call \emph{clique-wheels}) that act as suppressors of selection as the
number of vertices increases. In particular, we prove that the fixation
probability of the members of this class is at most $\frac{1}{2}(1-\frac{1}{r%
})$, i.e.~the half of the fixation probability of the complete graph, as $%
n\rightarrow \infty $. An example of a clique-wheel graph $G_{n}$ is
depicted in Figure~\ref{axinoswithring}. This graph consists of a clique of
size~$n\geq 3$ and an induced cycle of the same size~$n$ with a perfect
matching between them. We will refer in the following to the vertices of the
inner clique as \emph{clique vertices} and to the vertices of the outer
cycle as \emph{ring vertices}.

%
\begin{figure}[h!tb]
\centering 
\subfigure[]{ \label{axinoswithring}
\includegraphics[scale=0.8]{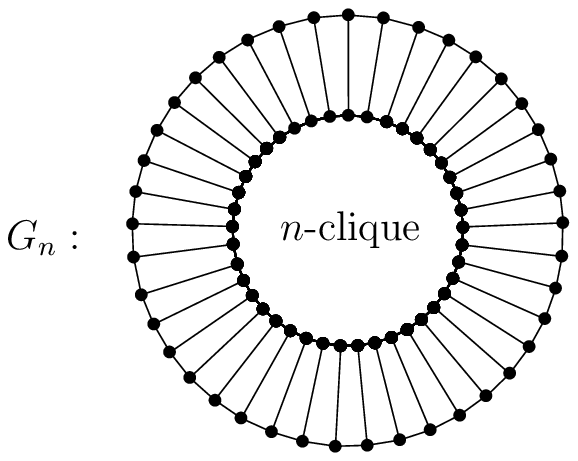}} \hspace{0.2cm} 
\subfigure[]{ \label{state-graph-nose-clique}
\includegraphics[scale=0.85]{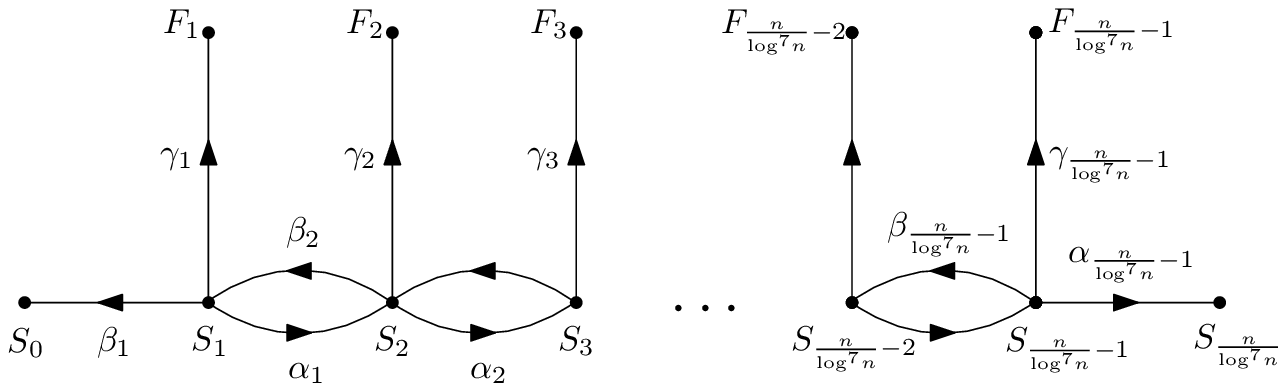}} 
\caption{(a)~The clique-wheel graph $G_{n}$ and (b)~the state graph of a
relaxed Markov chain for computing an upper bound of $h_{1}=h_{clique}$.}
\end{figure}

%

Denote by $h_{clique}$ (resp.~$h_{ring}$) the probability that all the
vertices of $G_{n}$ become black, given that we start with one black clique vertex
(resp.~with one black ring vertex). We first provide in the next lemma an
upper bound on $h_{clique}$. 

\begin{lemma}
\label{lemmaforhclique}For any $r\in \left( 1,\frac{4}{3}\right) $,%
\begin{equation*}
h_{clique}\leq \frac{7}{6n\left( \frac{4}{3r}-1\right) }+o\left( \frac{1}{n}%
\right)
\end{equation*}
\end{lemma}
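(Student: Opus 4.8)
The plan is to bound $h_{clique}$ by relaxing the original $2^n$-state Markov chain to a much simpler birth–death-like chain on a small state graph (indicated by Figure~\ref{state-graph-nose-clique}) that \emph{dominates} fixation, i.e.\ reaches its absorbing ``all black'' state with probability at least $h_{clique}$, and then to estimate that probability. The key structural observation to exploit is that the clique-wheel $G_n$ has a clique of degree $n+1$ and a cycle of degree $3$: a black clique vertex's offspring overwhelmingly lands inside the clique (probability $\sim (n-1)/(n+1)$) rather than on its matched ring vertex, and once a few clique vertices are black the ring still consists almost entirely of degree-$3$ white vertices that are reproductively ``cheap'' to be overwritten. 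I would first fix notation for the number of black clique vertices $i$ and black ring vertices $j$, compute $\rho_+$ and $\rho_-$ at a generic state exactly as in the proof of Theorem~\ref{worstcasetheorem} (summing $\frac{r}{\deg u}$ over cut edges from black to white, and $\frac{1}{\deg v}$ for the backward direction), and then bound these to get a state-independent or nearly state-independent forward bias that is \emph{less than $1$} for the relevant range of $r$.

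**The relaxation.**

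First I would argue that to go from $1$ black clique vertex to $2$ black clique vertices, the chain must pass through intermediate configurations, and I would collapse all of these into a coarse chain whose states track essentially ``how close we are to losing the single black vertex'' versus ``how close we are to gaining ground''. The crucial quantitative point: starting from one black clique vertex $u$, the probability that the process ever reaches a second black \emph{clique} vertex before extinction is $O(1/n)$, because at the start $\rho_-(\{u\}) = \frac{1}{n-|S|+r|S|}\cdot\frac{1}{\deg u}\cdot(\text{cut edges into }u)$ compares favorably against $\rho_+$ which is small: the offspring of $u$ lands on a white clique vertex with probability proportional to $\frac{r}{n+1}\cdot(n-1)$ out of a normalization of order $n$, so the forward bias hovers around $r\cdot\frac{n-1}{n+1}$-ish but divided by a backward term that is $\Theta(1)$ relative to $n$ — I would need to be careful here and actually compute that the relevant "escape" probability telescopes to $\frac{1}{n}\cdot\frac{1}{\frac{4}{3r}-1}$ up to constants, with the constant $\frac{4}{3r}$ arising from the ratio $\frac{n+1}{n-1}\to 1$ corrections combined with the cycle degree $3$ (the $4/3$ is $\frac{\deg(\text{clique})+?}{\cdots}$-type bookkeeping that produces exactly $\frac43$). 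The relaxed chain in Figure~\ref{state-graph-nose-clique} presumably has a ``nose-clique'' gadget capturing the first few steps; I would identify its transition probabilities $p_1, q_1, p_2, q_2, \ldots$ explicitly in terms of $r$ and $n$, solve the resulting small linear system for the absorption probability (as was done for $\tilde h_u$ in Theorem~\ref{theoremwithQ}), and expand asymptotically in $1/n$.

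**Finishing and the main obstacle.**

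After solving the small system, the absorption probability should come out as $\frac{c}{n(\frac{4}{3r}-1)} + o(1/n)$ for an explicit constant $c$, and I would check $c \le \frac{7}{6}$ (the bound need not be tight, so a clean overestimate suffices); the restriction $r \in (1, \frac43)$ is exactly what makes $\frac{4}{3r} - 1 > 0$, so the bound is meaningful and the denominator does not blow up. The main obstacle I anticipate is \emph{justifying the domination rigorously}: one must show that the coarse chain really is an upper bound, i.e.\ that lumping states and re-routing backward steps (as in the Theorem~\ref{theoremwithQ} argument, where a backward step from two black vertices is sent back to the single-vertex state $u$) can only \emph{increase} the fixation probability, and that the worst-case choice of which neighbor turns black is handled correctly. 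This requires a careful coupling or monotonicity argument: intuitively, more black vertices and more ``favorable'' configurations can only help fixation, but on a non-regular graph like $G_n$ the forward bias genuinely depends on the state (clique vs.\ ring vertices have wildly different degrees), so one cannot simply invoke the clean birth–death comparison of Theorem~\ref{worstcasetheorem} — instead one must bound the bias in the \emph{worst} direction at each coarse step and verify these bounds are consistent with a single solvable chain. Getting the constant $\frac{4}{3r}$ and the numerator $\frac{7}{6}$ right will require tracking the $\frac{1}{n+1}$ versus $\frac{1}{n-1}$ and the degree-$3$ contributions through the algebra without sloppiness, but these are routine once the relaxed chain is pinned down.
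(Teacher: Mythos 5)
There is a genuine gap, and it sits exactly at the point you flag as ``the crucial quantitative point.'' You claim that, starting from one black clique vertex, the probability of ever reaching a \emph{second} black clique vertex before extinction is $O(1/n)$. It is not: at the state with $k$ black clique vertices and an all-white ring, the forward weight is $r\,k\,\frac{n-k}{n}$ (a black clique vertex, of degree $n$, hits a white clique vertex), while the backward weight is $k\bigl(\frac{1}{3}+\frac{n-k}{n}\bigr)$ --- the term $\frac{k}{3}$ coming from the $k$ matched \emph{white ring vertices of degree $3$}, each of which overwrites its black clique neighbour with probability $\frac13$ when chosen. So the chance of going from one to two black clique vertices is $\Theta(1)$ (roughly $\tfrac{r}{r+4/3}$), and the constant $\tfrac{4}{3}$ in the statement arises precisely from this ring back-pressure: the backward-to-forward ratio is $\frac{\frac43 n-k}{r(n-k)}\ge\frac{4}{3r}>1$ for $r<\frac43$. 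What is actually $O(1/n)$ per step is the probability of \emph{touching the ring} (a black clique vertex places its offspring on its matched ring vertex), whose weight $r\,\frac{k}{n}$ has ratio at most $\frac{7}{6n}$ against the forward weight. The paper's bound $\frac{7}{6n(\frac{4}{3r}-1)}$ is the accumulation of these small ``escape to the ring'' probabilities over the lifetime of a random walk with \emph{backward} drift $\frac{4}{3r}$, plus the exponentially small chance of that walk climbing to $\frac{n}{\log^{7}n}$ black clique vertices; your sketch never identifies this as the dominant contribution, and your plan to recover the answer from a constant-size ``nose-clique'' linear system (as for $\tilde h_u$ in Theorem~\ref{theoremwithQ}) cannot produce it, because the relevant relaxed chain (Figure~\ref{state-graph-nose-clique}) is one-dimensional with $\frac{n}{\log^{7}n}$ states, indexed by the number of black clique vertices with the ring all white, and is analysed by telescoping the differences $h_{k+1}-h_k$ in the recurrence $h_k=\alpha_k h_{k+1}+\beta_k h_{k-1}+\gamma_k$, not by solving a small fixed system.

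Conversely, the obstacle you single out --- rigorously justifying domination --- is not where the difficulty lies. The paper's relaxation simply sets the absorption probabilities of the ``bad-for-us'' states to their trivial upper bound $1$: any state in which a ring vertex is black ($F_k$, $k\ge1$) and the state with $\frac{n}{\log^{7}n}$ black clique vertices are declared winning. Replacing some coordinates of the harmonic vector $h$ by $1$ in the system $h=Ph$ can only increase $h_1$, so no coupling or worst-case re-routing argument is needed; the entire content of the lemma is the explicit computation of $\alpha_k,\beta_k,\gamma_k$ above (which also fixes your degree bookkeeping: clique vertices have degree $n$, not $n+1$) and the bound
\begin{equation*}
h_{k+1}-h_k\;\ge\;\frac{4}{3r}\,(h_k-h_{k-1})-\frac{7}{6n},
\end{equation*}
summed over $k$ up to $\frac{n}{\log^{7}n}-1$ with $h_{n/\log^{7}n}=1$, which yields $h_1\le\frac{7}{6n(\frac{4}{3r}-1)}+o(1/n)$ since the geometric series $\sum_k\bigl(\frac{4}{3r}\bigr)^k$ diverges. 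Without the exact identification of the $\frac{4}{3r}$ drift and of the per-step ring-escape ratio $\le\frac{7}{6n}$, your outline does not reach the claimed constants, and its central intermediate claim is false as stated.
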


\begin{proof}
Denote by $S_{k}$ the state, in which exactly $k\geq 0$ clique vertices are
black and all ring vertices are white. Note that $S_{0}$ is the empty state. Denote
by $F_{k}$ the state where at least one ring vertex of $G_{n}$ and exactly $k\geq 0$
clique vertices are black. Furthermore, for every state $S$, denote by $%
h_{S} $ the probability that, starting at the state $S$, we eventually reach
the full state (i.e.~the state where all vertices are black). In the
following we denote for simplicity the probability $h_{S_{k}}$ by $h_{k}$,
for every $k=0,1,\ldots ,n$. Note that $h_{0}=0$ and $h_{1}=h_{clique}$,
since $S_{0}$ is the empty state and $S_{1}$ is the state with only one
black clique vertex. In order to compute an upper bound of $h_{1}$, we can
set the value $h_{\frac{n}{\log ^{7}n}}$ and the values $h_{F_{k}}$ for
every $k\geq 1$ to their trivial upper bound $1$. That is, we assume that
the state $S_{\frac{n}{\log ^{7}{n}}}$, as well as all states $F_{k}$, where 
$k\geq 1$, are absorbing. After performing these relaxations (and eliminating self loops), 
we obtain a Markov chain, whose state graph is illustrated in Figure~\ref{state-graph-nose-clique}. 

For any $k=1,\ldots ,\frac{n}{\log ^{7}{n}}-1$ in this Markov chain,%
\begin{equation}
h_{k}=\alpha _{k}h_{k+1}+\beta _{k}h_{k-1}+\gamma _{k}  \label{eqforhk}
\end{equation}%
where%
\begin{eqnarray}
\alpha _{k} &=&\frac{r\frac{k(n-k)}{n}}{r\frac{k(n-k+1)}{n}+k\left( \frac{1}{%
3}+\frac{n-k}{n}\right) }  \notag \\
\beta _{k} &=&\frac{k\left( \frac{1}{3}+\frac{n-k}{n}\right) }{r\frac{%
k(n-k+1)}{n}+k\left( \frac{1}{3}+\frac{n-k}{n}\right) }
\label{alpha-beta-gamma} \\
\gamma _{k} &=&\frac{r\frac{k}{n}}{r\frac{k(n-k+1)}{n}+k\left( \frac{1}{3}+%
\frac{n-k}{n}\right) }  \notag
\end{eqnarray}%
Notice now by (\ref{alpha-beta-gamma}) that%
\begin{equation}
\frac{\beta _{k}}{\alpha _{k}}=\frac{\frac{4}{3}n-k}{r(n-k)}\geq \frac{4}{3r}%
>1  \label{beta-alpha}
\end{equation}%
since $r\in \left( 1,\frac{4}{3}\right) $ by assumption. Furthermore, since $%
\frac{1}{1-\frac{1}{\log ^{7}n}}\leq \frac{7}{6}$ for sufficiently large $n$%
, it follows that for every $k=1,2,\ldots ,\frac{n}{\log ^{7}{n}}-1$,%
\begin{equation}
\frac{\gamma _{k}}{\alpha _{k}}=\frac{1}{n-k}\leq \frac{7}{6n}
\label{gamma-alpha}
\end{equation}%
Now, since $\alpha _{k}+\beta _{k}+\gamma _{k}=1$, (\ref{eqforhk}) implies
by (\ref{beta-alpha}) and (\ref{gamma-alpha}) that%
\begin{eqnarray*}
h_{k+1}-h_{k} &=&\frac{\beta _{k}}{\alpha _{k}}(h_{k}-h_{k-1})-\frac{\gamma
_{k}}{\alpha _{k}}(1-h_{k}) \\
&\geq &\frac{4}{3r}(h_{k}-h_{k-1})-\frac{7}{6n}
\end{eqnarray*}%
Thus, since $h_{0}=0$ and $h_{k}\geq h_{k-1}$ for all $k=1,\dots ,\frac{n}{%
\log ^{7}{n}}$, it follows that for every $k$,%
\begin{eqnarray*}
h_{k+1}-h_{k} &\geq &\left( \frac{4}{3r}\right) ^{k}(h_{1}-h_{0})-\frac{7}{6n%
}\cdot \sum_{i=0}^{k-1}(\frac{4}{3r})^{i} \\
&=&\left( \frac{4}{3r}\right) ^{k}h_{1}-\frac{7}{6n}\cdot \frac{(\frac{4}{3r}%
)^{k}-1}{\frac{4}{3r}-1}
\end{eqnarray*}%
Consequently, since $h_{\frac{n}{\log ^{7}n}}=1$ in the relaxed Markov
chain, we have that%
\begin{eqnarray*}
1-h_{1} &=&\sum_{k=1}^{\frac{n}{\log ^{7}{n}}-1}(h_{k+1}-h_{k}) \\
&\geq &\sum_{k=1}^{\frac{n}{\log ^{7}{n}}-1}\left[ \left( \frac{4}{3r}%
\right) ^{k}h_{1}-\frac{7}{6n}\cdot \frac{(\frac{4}{3r})^{k}-1}{\frac{4}{3r}%
-1}\right] \Rightarrow \\
h_{1}\sum_{k=0}^{\frac{n}{\log ^{7}{n}}-1}\left( \frac{4}{3r}\right) ^{k}
&\leq &1+\frac{7}{6n\left( \frac{4}{3r}-1\right) }\sum_{k=0}^{\frac{n}{\log
^{7}{n}}-1}\left[ \left( \frac{4}{3r}\right) ^{k}-1\right]
\end{eqnarray*}%
and thus%
\begin{equation*}
h_{1}\leq \frac{7}{6n\left( \frac{4}{3r}-1\right) }+\frac{1}{\sum_{k=0}^{%
\frac{n}{\log ^{7}{n}}-1}\left( \frac{4}{3r}\right) ^{k}}
\end{equation*}%
This completes the proof of the lemma, since $\frac{4}{3r}>1$.
\end{proof}

\medskip

The next corollary follows by the proof of Lemma~\ref{lemmaforhclique}.

\begin{corollary}
\label{interpretation}Starting with one black clique vertex, the probability
that we make at least one ring vertex black, or that we eventually reach $\frac{n}{%
\log ^{7}{n}}$ black clique vertices, is at most $\frac{7}{6n\left( \frac{4}{%
3r}-1\right) }+o\left( \frac{1}{n}\right) $.
\end{corollary}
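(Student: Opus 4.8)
The plan is to observe that the quantity already bounded in the proof of Lemma~\ref{lemmaforhclique} is, by the very construction of the relaxed Markov chain used there, exactly the probability named in the corollary, so that no new estimate is needed.

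First I would recall the relaxed chain: it lives on the states $S_0, S_1, \ldots, S_{n/\log^7 n}$ together with the states $F_k$ for $k \geq 1$, obtained from the original process of~\cite{Nowak05} on $G_n$ by declaring $S_{n/\log^7 n}$ and every $F_k$ (for $k \geq 1$) to be absorbing with assigned value $1$, while $S_0$ stays absorbing with value $0$. By definition, $h_1$ in this chain is the probability, starting from the single-black-clique-vertex state $S_1$, of being absorbed into the set $\{S_{n/\log^7 n}\} \cup \{F_k : k \geq 1\}$ rather than into $S_0$.

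Next I would identify this absorption event with the event in the statement. On the transient states $S_1, \ldots, S_{n/\log^7 n - 1}$ the relaxed chain carries exactly the transition probabilities of the original process on $G_n$ (the only modification was to sever the out-transitions of $S_{n/\log^7 n}$ and of the $F_k$'s). Hence, started from one black clique vertex, the original process enters $\{S_{n/\log^7 n}\} \cup \{F_k : k \geq 1\}$ before extinction precisely when it either turns some ring vertex black (entering an $F_k$) or reaches $\frac{n}{\log^7 n}$ black clique vertices with all ring vertices still white (entering $S_{n/\log^7 n}$). Since any trajectory that ever attains $\frac{n}{\log^7 n}$ black clique vertices has either passed through $S_{n/\log^7 n}$ or already turned a ring vertex black at an earlier step, this is the same event as ``we make at least one ring vertex black, or we eventually reach $\frac{n}{\log^7 n}$ black clique vertices''; so that probability equals $h_1$.

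Finally I would invoke the bound derived along the way in the proof of Lemma~\ref{lemmaforhclique}, namely
\begin{equation*}
h_1 \leq \frac{7}{6n\left(\frac{4}{3r}-1\right)} + \frac{1}{\sum_{k=0}^{\frac{n}{\log^7 n}-1}\left(\frac{4}{3r}\right)^k},
\end{equation*}
and note that since $\frac{4}{3r} > 1$ the geometric sum in the denominator grows (indeed exponentially in $\frac{n}{\log^7 n}$), so the second term is $o\!\left(\frac{1}{n}\right)$, which gives the claim. I do not expect a genuine obstacle here; the only point needing care is the bookkeeping in the third paragraph — verifying that the relaxation leaves the dynamics among the transient states untouched and that the union of the two events in the statement really coincides with absorption into the ``good'' set, so that the inequality established for $h_1$ transfers verbatim.
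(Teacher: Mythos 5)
Your argument is correct and matches the paper, which simply notes that the corollary ``follows by the proof of Lemma~\ref{lemmaforhclique}'': the quantity in the corollary is exactly the absorption probability $h_1$ of the relaxed chain (the one with $S_{n/\log^7 n}$ and the $F_k$, $k\geq 1$, made absorbing), and the final inequality of that proof together with $\frac{4}{3r}>1$ gives the bound, the geometric-sum term being $o\!\left(\frac{1}{n}\right)$. Your bookkeeping identifying the event in the statement with absorption into the ``good'' set is exactly the intended (implicit) content of the paper's one-line justification.
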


In the remainder of this section, we will also provide an upper bound on $h_{ring}$, 
thus bounding the fixation probability $f_{G_{n}}$ of $G_{n}$ (cf.~Theorem~\ref{bound-fixation-thm}). 
Consider the Markov chain $\mathcal{M}$ that is
depicted in Figure~\ref{MCM}. Our analysis will use the following auxiliary
lemma which concerns the expected time to absorption of this Markov chain. 
\begin{figure}[tbh]
\centering
\includegraphics[scale=0.695]{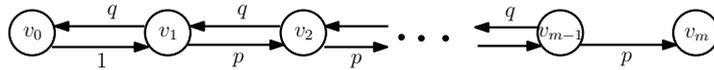}
\caption{The Markov chain $\mathcal{M}$.}
\label{MCM}
\end{figure}

\begin{lemma}
\label{lemmaforsteps}Let $p\neq q$ and $p+q=1$. Then, as $m$ tends to
infinity, the expected number of steps needed for $\mathcal{M}$ to reach $%
v_{m}$, given that we start at $v_{1}$, satisfies%
\begin{equation}
\mu _{1}=\left\{ 
\begin{array}{ll}
e^{m\ln {\frac{q}{p}}+o(m)} & \quad \text{if $p<q$} \\ 
\frac{m}{p-q}+o(m) & \quad \text{if $p>q$}%
\end{array}%
\right.
\end{equation}
\end{lemma}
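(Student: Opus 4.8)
The plan is to analyze the Markov chain $\mathcal{M}$ of Figure~\ref{MCM} as a birth-death type chain on states $v_1, v_2, \ldots, v_m$, where from each interior state $v_i$ there is a forward probability $p$ (to $v_{i+1}$) and a backward probability $q$ (to $v_{i-1}$, or to some reset state), with $v_1$ reflecting and $v_m$ absorbing. Let $\mu_i$ denote the expected number of steps to reach $v_m$ starting from $v_i$. The first step is to write down the standard first-step recurrence $\mu_i = 1 + p\,\mu_{i+1} + q\,\mu_{i-1}$ for the interior states, together with the boundary behavior at $v_1$ (reflecting: $\mu_1 = 1 + p\,\mu_2 + q\,\mu_1$, depending on exactly how the chain in Figure~\ref{MCM} resets) and $\mu_m = 0$. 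Setting $d_i = \mu_i - \mu_{i+1}$ turns the second-order recurrence into a first-order one for the differences, $d_{i-1} = \tfrac{q}{p} d_i - \tfrac{1}{p}$ or similar, whose solution is a geometric series in the ratio $q/p$.

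Next I would solve this telescoping explicitly. Writing $\theta = q/p$, the general solution for $d_i$ is an affine combination of $\theta^i$ and a particular constant solution $\tfrac{1}{q-p}$ (valid since $p \neq q$); pinning down the constant using the boundary condition at $v_1$ and then summing $\mu_1 = \sum_{i=1}^{m-1} d_i$ gives a closed form for $\mu_1$ as a combination of a geometric sum $\sum \theta^i$ and a linear term $\tfrac{m}{p-q}$ (or $\tfrac{m}{q-p}$, depending on sign). The case split then falls out of the asymptotics of this closed form as $m \to \infty$: if $p < q$ then $\theta > 1$, the geometric term $\theta^{m} = e^{m \ln(q/p)}$ dominates and $\mu_1 = e^{m\ln\frac{q}{p} + o(m)}$; if $p > q$ then $\theta < 1$, the geometric sum converges to a constant, the linear term dominates, and $\mu_1 = \tfrac{m}{p-q} + o(m)$.

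The main obstacle I anticipate is not the algebra but getting the boundary condition at $v_1$ exactly right from the picture of $\mathcal{M}$ in Figure~\ref{MCM}: whether $v_1$ is purely reflecting, or whether backward moves from interior states jump all the way back to $v_1$ (a "reset" chain rather than a nearest-neighbor chain). These two variants give slightly different constants in $d_i$ but — crucially — the same leading asymptotics in both regimes, so one must verify that the $o(m)$ error terms genuinely absorb the discrepancy. A secondary point of care is checking that the geometric prefactor in the $p<q$ case is subexponential enough to be hidden inside the $o(m)$ in the exponent (it is: any polynomial or constant prefactor contributes only $O(\ln m) = o(m)$ to the exponent), and symmetrically that in the $p > q$ case the contribution of the decaying geometric part is $O(1) = o(m)$ relative to the linear main term. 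Once the boundary convention is fixed, the rest is a routine solution of a linear recurrence and extraction of the dominant term in each case.
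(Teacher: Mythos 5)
Your proposal is correct and follows essentially the same route as the paper: a first-step recurrence $\mu_i = 1 + p\mu_{i+1} + q\mu_{i-1}$, reduction to a first-order recurrence for the differences with ratio $q/p$, telescoping to a closed form for $\mu_1$ mixing a geometric sum with a linear term, and reading off the two asymptotic regimes. The boundary detail you flag is resolved in the paper by the condition $\mu_0 = 1 + \mu_1$ (so $\mu_1 - \mu_0 = -1$), and, as you correctly note, this choice only affects lower-order terms absorbed by the $o(m)$.
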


\begin{proof}
For $i=0,1,\ldots ,m$, let $\mu _{i}$ denote the expected number of steps
needed to reach $v_{m}$. Clearly, $\mu _{m}=0$ and $\mu _{0}=1+\mu _{1}$.
Furthermore, for $i=1,\ldots m-1$, it follows that 
\begin{equation}
\mu _{i}=1+p\mu _{i+1}+q\mu _{i-1}
\end{equation}%
i.e.%
\begin{eqnarray}
\mu _{i+1}-\mu _{i} &=&\frac{q}{p}(\mu _{i}-\mu _{i-1})-\frac{1}{p}  \notag
\\
&=&\left( \frac{q}{p}\right) ^{i}(\mu _{1}-\mu _{0})-\frac{1}{p}%
\sum_{j=0}^{i-1}\left( \frac{q}{p}\right) ^{j} \\
&=&-\left( \frac{q}{p}\right) ^{i}-\frac{1}{q-p}\left( \left( \frac{q}{p}%
\right) ^{i}-1\right)  \notag
\end{eqnarray}%
Consequently, we have that%
\begin{eqnarray}
\sum_{i=1}^{m-1}[\mu _{i+1}-\mu _{i}] &=&-\mu _{1}\Leftrightarrow  \notag \\
\mu _{1} &=&\sum_{i=1}^{m-1}\left[ \left( 1+\frac{1}{q-p}\right) \left( 
\frac{q}{p}\right) ^{i}-\frac{1}{q-p}\right] \\
\mu _{1} &=&\left( 1+\frac{1}{q-p}\right) \frac{\left( \frac{q}{p}\right)
^{m}-\frac{q}{p}}{\frac{q}{p}-1}-\frac{m-1}{q-p}  \notag
\end{eqnarray}%
which concludes the proof of the lemma for large $m$.
\end{proof}

\medskip

Denote in the following by $\mathcal{M}_{1}$ the Markov chain of the
stochastic process defined in~\cite{Nowak05} (see Section~\ref%
{Nowak-model-subsec} for an overview), when the underlying graph is the 
clique-wheel $G_{n}$, cf.~Figure~\ref{axinoswithring}. 
The next definition will be useful for the discussion below.

\begin{definition}[Ring steps]
\label{ringstep} A transition of the Markov chain $\mathcal{M}_{1}$ is
called a \emph{ring step} if it results in a change of the number of black
vertices in the outer ring (i.e.~ring vertices).
\end{definition}

We now present some \emph{domination statements} that simplify the Markov
chain $\mathcal{M}_{1}$. More specifically, all these statements will
increase the probability of reaching fixation when we start with one black
ring vertex, such that we finally get an upper bound on $h_{ring}$.

\begin{description}
\item[$D_{1}$:] Let $v$ be a vertex on the outer ring, and let $%
v^{\prime }$ be its (unique) neighbor in the clique. We will \emph{forbid}
transitions of the Markov chain $\mathcal{M}_{1}$, where a white colored $%
v^{\prime }$ tries to make white a black colored~$v$.

\item[$D_{2}$:] Fixation is forced when either of the following happens:\vspace{-0.1cm}
\begin{description}
\item[$A_{1}$:] The outer ring reaches $\log {n}$ black vertices.\vspace{-0.1cm}
\item[$A_{2}$:] The number of ring steps in order to reach $\log {n}$ black
ring vertices is more than $\Theta (\log ^{2}{n})$.\vspace{-0.1cm}
\item[$A_{3}$:] The clique reaches $n$ black vertices.\vspace{-0.1cm}
\item[$A_{4}$:] A black clique vertex makes black a white vertex on the outer ring.
\end{description}
\end{description}

Let now $\mathcal{M}_{2}$ be the modified Markov chain after these
domination statements are imposed. Given these statements, we can prove the following. 

\begin{proposition} \label{propositionsinglearc}
If ${\cal M}_2$ has not yet reached an absorbing state, then the outer ring consists 
of a single arc of consecutive black vertices, or has no black vertex at all. 
\end{proposition}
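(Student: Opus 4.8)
The plan is to prove this by induction on the number of ring steps (transitions that change the number of black ring vertices), since between consecutive ring steps the configuration of black ring vertices is unchanged. The induction hypothesis is that, as long as $\mathcal{M}_2$ is not absorbed, after any number of ring steps the set of black ring vertices forms a single arc of consecutive vertices on the cycle (possibly empty, and possibly a single vertex). The base case is immediate: we start with exactly one black ring vertex, which is trivially a single arc; and if the process ever makes the ring empty of black vertices, that is also consistent with the statement.

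For the inductive step, suppose before a given ring step the black ring vertices form a single arc $A$ (nonempty, since otherwise we would need a black clique vertex to color a ring vertex, which triggers absorption via $A_4$). A ring step either increases or decreases the number of black ring vertices by one. I would analyze the two cases:

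\textbf{Increase.} A ring vertex $v$ becomes black. In $\mathcal{M}_2$ the only way a white ring vertex can be colored black is by one of its two ring neighbors (coloring by the clique neighbor is forbidden by $A_4$, which forces fixation). If $v$'s ring neighbor that colors it is an endpoint of the arc $A$, then $v$ is adjacent to $A$ and $A \cup \{v\}$ is again a single arc. The only alternative would be that some black ring vertex outside a single-arc configuration does the coloring, but by the induction hypothesis all black ring vertices already lie in the single arc $A$, so the new black vertex is necessarily adjacent to an endpoint of $A$. Hence the arc property is preserved.

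\textbf{Decrease.} A black ring vertex $v$ becomes white; this offspring-placing vertex must be a white ring neighbor of $v$ (a white clique neighbor is forbidden by $D_1$; a black neighbor would not change $v$'s color). A white ring neighbor of a black vertex $v$ in a single-arc configuration exists only when $v$ is an endpoint of $A$ (an interior vertex of $A$ has both ring neighbors black). Removing an endpoint from a single arc leaves a single arc (one vertex shorter), or the empty set if $|A|=1$. Either way the property is maintained.

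The main obstacle — or rather the place requiring care — is making sure the case analysis of which transitions are actually available in $\mathcal{M}_2$ is exhaustive: one must carefully invoke $D_1$ to rule out a white clique vertex turning a ring endpoint white, and $A_4$ (which forces fixation, i.e.\ absorption) to rule out a black clique vertex coloring a ring vertex, so that the only ring steps that can occur while $\mathcal{M}_2$ is non-absorbed are those between adjacent ring vertices. Once these are established, preservation of the single-arc structure is a purely combinatorial observation about intervals on a cycle. I would also remark that the hypothesis ``$\mathcal{M}_2$ has not yet reached an absorbing state'' is essential precisely because reaching $\log n$ black ring vertices ($A_1$) or exceeding $\Theta(\log^2 n)$ ring steps ($A_2$) are exactly the events that would otherwise let the arc structure break down or become irrelevant.
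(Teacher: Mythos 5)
Your proof is correct and follows essentially the same route as the paper: the paper's argument is exactly the observation that a second black arc could only arise from a white clique vertex whitening its black ring neighbor (forbidden by $D_1$) or a black clique vertex blackening a white ring vertex (absorption at $A_4$), which is what your case analysis establishes, merely dressed up as an explicit induction on ring steps.
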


\begin{proof}
This follows by noticing that a second black arc can only be created either
by a white colored vertex on the clique making white its black neighbor on
the outer ring (this is impossible because of $D_{1}$), or by a black vertex
on the clique making black its white neighbor on the outer ring (this will
lead to absorption at $A_{4}$ of $D_{2}$).
\end{proof}

\medskip

The following definitions will be useful in the sequence.

\begin{definition}[Offspring]
\label{offspring}Given the history $H_{t}$ of the Markov chain up to time $t$, 
the current state $S\subseteq V$ at time $t$ (i.e.~the set of black
colored vertices in the graph), and a vertex $v\in S$, we will say that $v$
is an \emph{offspring} of $u\in V$ if and only if there exists a transition
path in $H_{t}$ that proves that $v$ is black because of $u$.
\end{definition}

Notice in Definition~\ref{offspring} that $u$ is not necessarily black at
time $t$.

\begin{definition}[Birth in the clique]
\label{birth}We will say that a vertex $v^{\prime }$ is \emph{born} in the
clique if and only if its (unique) neighbor $v$ in the outer ring is black
and makes a transition to the clique.
\end{definition}

Notice in Definition~\ref{birth} that the color of $v^{\prime }$ is
irrelevant before $v^{\prime }$ is born in the clique. We only need that the
color of $v$ is black. Furthermore, the above definition allows for a
specific vertex to be born more than once (i.e.~at different time steps).
The proof of our main theorem can now be reduced to a collection of lemmas.
Lemma~\ref{lemmaring} concerns the behavior of the ring.

\begin{lemma}
\label{lemmaring} Let $\mathcal{B}_{1}$ be the stochastic process describing
the ring steps in Markov chain $\mathcal{M}_{2}$. Given that we do not have
absorption at $A_{4}$, then $\mathcal{B}_{1}$ is a birth-death process with
forward bias equal to $r$. Furthermore, given that we start with a single 
black vertex on the ring, the following hold:
\begin{enumerate}
\item[(1)] The probability that the number of black vertices in the outer
ring reaches $\log {n}$ before absorption at $A_{2},A_{3}$ or $A_{4}$ is at
most $\frac{1-\frac{1}{r}}{1-\left( \frac{1}{r}\right) ^{\log n}}$.
\item[(2)] The probability that more than $\log ^{2}{n}$ ring steps are
needed in order to reach $\log {n}$ black colored vertices in the ring, or
to reach absorption in $A_{2}$, $A_{3}$, or $A_{4}$ is at most $O\left( 
\frac{1}{\log {n}}\right) $.
\end{enumerate}
\end{lemma}

\begin{proof}
Recall that we do not allow transitions where the clique affects the number
of black colored vertices in the outer ring (by the domination statements $D_{1}$ and $A_{4}$). 
Then, it can be easily seen that the forward bias of the 
birth-death process $\mathcal{B}_{1}$ (i.e.~the ratio of the forward
probability over the backward probability) is $\frac{\frac{2r}{W}\frac{1}{3}%
}{\frac{2}{W}\frac{1}{3}}=r$, where $W$ is the sum of the fitness of every
vertex in the graph. Thus, part (1) of the lemma follows by equation~(\ref%
{rho-general}) (for an overview of birth-death processes, see also~\cite%
{NorrisMarkov,nowak06-book}).

For part (2), let $X$ denote the number of ring steps needed in order to
reach $\log {n}$ black colored vertices in the ring, or to reach absorption
in $A_{2},A_{3}$ or $A_{4}$. Then $X$ is stochastically dominated by the
number of steps needed for Markov chain $\mathcal{M}$ (cf.~Figure~\ref{MCM})
to reach $v_{m}$, with $m=\log {n}$ and $p=\frac{r}{r+1}$. Hence, by Lemma~%
\ref{lemmaforsteps} and Markov's inequality, we get that%
\begin{equation*}
\Pr (X\geq \log ^{2}{n})\leq c\frac{1}{\log {n}}
\end{equation*}%
for some positive constant $c=c(r)$.
\end{proof}

\medskip

The next lemma bounds the number of vertices that are born in the clique (see Definition~\ref{birth}).

\begin{lemma}
\label{lemmaforborn} Given that we start with a single 
black vertex on the ring, the probability that we have more than $\log^7{n}$
births in the clique is at most $O\left( \frac{1}{\log{n}} \right)$.
\end{lemma}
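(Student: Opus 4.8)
The plan is to control the number of births in the clique by coupling it to the ring dynamics analyzed in Lemma~\ref{lemmaring}. A birth in the clique (Definition~\ref{birth}) happens only when some black ring vertex $v$ is selected for reproduction and chooses to place its offspring onto its unique clique neighbor $v'$. So the first step is to bound, for a single ``active period'' of the ring (i.e.\ the time until the ring either reaches $\log n$ black vertices or the process is absorbed at $A_2,A_3$ or $A_4$), how many such birth-events can occur. By Proposition~\ref{propositionsinglearc}, while $\mathcal{M}_2$ is not absorbed the black ring vertices form a single arc, so there are at most $\log n$ black ring vertices at any time (by $A_1$), and hence at most $\log n$ distinct ring vertices that could ever give birth during this period. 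Each such ring vertex is selected and, when selected, places its offspring on its clique neighbor with probability $\tfrac13$ (the weight of the ring--clique matching edge), so each ``selection of a black ring vertex'' yields a birth with probability $\tfrac13$.

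The second step is to count selections of black ring vertices. Every step that changes the ring configuration is a ring step, and by the domination statement $A_2$ of $D_2$ the process is forced to absorption once the number of ring steps reaches $\Theta(\log^2 n)$. But a selection of a black ring vertex need not be a ring step (the offspring may land on the clique neighbor, which does not change the ring count); so I would instead observe that between two consecutive ring steps, the expected number of intervening ``null'' selections of black ring vertices is $O(1)$ — indeed, conditioned on a black ring vertex being selected, the probability that this selection is a ring step (offspring placed on a white ring neighbor) versus a birth is a fixed ratio bounded away from $0$, namely the geometric with success probability $\Theta(1)$. Hence the total number of births during one active period is stochastically dominated by a sum of $O(\log^2 n)$ i.i.d.\ geometric random variables, which has expectation $O(\log^2 n)$ and is concentrated; in particular the probability it exceeds $\log^7 n$ is $O\!\left(\tfrac{1}{\log n}\right)$ (in fact far smaller), by Markov's inequality applied after noting $\log^7 n = \omega(\log^2 n)$.

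The third step handles the possibility of \emph{multiple} active periods: a vertex born in the clique may subsequently be recolored, the ring may drop back to zero black vertices, and later a clique vertex could in principle re-seed the ring — except that re-seeding the ring from the clique is exactly event $A_4$, which is an absorbing state of $\mathcal{M}_2$. Therefore, starting from a single black ring vertex, there is only \emph{one} active period of the ring before $\mathcal{M}_2$ is absorbed: once the ring loses all its black vertices, it stays white (no $D_1$-forbidden transition and no $A_4$ transition can revive it), so no further births in the clique occur. This reduces the whole statement to the single-period bound of the second step. I would then combine: with probability $1-O(\tfrac1{\log n})$ the ring reaches at most $\log^2 n$ ring steps before absorption (Lemma~\ref{lemmaring}(2)), and conditioned on that, with probability $1-O(\tfrac1{\log n})$ the number of births is at most $\log^7 n$; a union bound gives the claimed $O\!\left(\tfrac{1}{\log n}\right)$.

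The main obstacle I anticipate is making the bookkeeping of the second step fully rigorous — in particular, carefully defining ``selection of a black ring vertex'' as a stopping-time-indexed sequence and verifying that, conditioned on the history, each such selection independently produces a birth with probability bounded below and away from $1$, so that the geometric-domination argument is valid. One must be slightly careful that the sum of fitnesses $W$ in the denominators (which varies with the state) cancels in the relevant conditional probabilities, exactly as it did in the forward-bias computation of Lemma~\ref{lemmaring}; once that cancellation is noted, the conditional birth probability is a clean constant and the domination by a sum of $O(\log^2 n)$ geometrics goes through.
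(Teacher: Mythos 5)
Your overall strategy is the same as the paper's: dominate the number of births occurring between two consecutive ring steps by a geometric random variable, invoke Lemma~\ref{lemmaring}(2) to restrict attention to at most $\log^2 n$ ring steps, and finish with a union bound (your third step, that the ring cannot be re-seeded from the clique without triggering absorption at $A_4$, so there is a single active period, is also fine). However, the key quantitative claim in your second step is false: conditioned on a black ring vertex being selected, the chance that this produces a ring step rather than a birth is \emph{not} bounded away from zero. By Proposition~\ref{propositionsinglearc} the black ring vertices form a single arc of some length $z$, and $z$ can be as large as $\log n$ before absorption at $A_1$. Only the two endpoints of that arc can place an offspring on a white ring neighbor, whereas every one of the $z$ black ring vertices can place an offspring on its clique neighbor. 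Hence among ring-involved moves the conditional probability of a ring step is of order $\frac{2(1+r)}{2(1+r)+zr}=\Theta(1/z)$, i.e.\ only $\Omega(1/\log n)$, and the expected number of births between two consecutive ring steps is $\Theta(\log n)$, not $O(1)$. This is precisely why the paper dominates $Y+1$ by a geometric with success probability merely $\geq 1/\log n$, applies Markov with threshold $\log^5 n$ per interval, and union-bounds over $\log^2 n$ intervals --- which is where the exponent $7$ comes from.

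The correction does not sink your final bound: with success probability $\Omega(1/\log n)$ per interval the total number of births has expectation $O(\log^3 n)$, so Markov's inequality still gives $\Pr(\text{births}>\log^7 n)=O(1/\log^4 n)$, and adding the $O(1/\log n)$ failure probability from Lemma~\ref{lemmaring}(2) yields the claimed $O(1/\log n)$. But as written, the step asserting a ``fixed ratio bounded away from $0$'' (a geometric with success probability $\Theta(1)$) is wrong and must be replaced by the $z\le\log n$ argument above before the domination goes through.
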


\begin{proof}
For the proof, we will ignore for the moment what happens in the clique and
how the clique affects the ring, since these steps are either forbidden (by $%
D_{1}$) or lead to absorption (by $A_{4}$).

Let $Y$ be the number of births in the clique (see Definition~\ref{birth})
that we observe between two ring steps. Notice that at any time before
absorption, there will be exactly 2 white colored vertices in the outer ring
that can perform a ring step (see Definition~\ref{ringstep}). Furthermore,
if the number of black vertices in the ring is more than 2, then not all
black vertices can affect the number of black vertices in the ring. We now
restrict ourselves, to observe only ring-involved moves (forgetting about
the clique), that is, transitions where only vertices of the ring that can
cause a ring step or a birth in the clique are chosen. Given that $\mathcal{M%
}_{2}$ (i.e.~the modified Markov chain) has not been absorbed, the
probability that a ring step happens next is 
\begin{equation*}
p_{step}=\frac{2(1+r)}{2+zr}\frac{1}{3}
\end{equation*}%
where $z$ is the number of black colored vertices in the outer ring.
Similarly, the probability that a birth in the clique happens next is%
\begin{equation*}
p_{birth}=\frac{zr}{2+zr}\frac{1}{3}
\end{equation*}%
Consequently, the random variable $Y+1$ is stochastically dominated by a
geometric random variable with probability of success%
\begin{equation*}
p=\frac{p_{step}}{p_{step}+p_{birth}}=\frac{2r+2}{zr+2r+2}\geq \frac{1}{\log 
{n}}
\end{equation*}%
where in the last inequality we used the observation that at any time before
absorption, the number of black vertices in the ring is at most $\log {n}$
because of $A_{1}$. But then, by Markov's inequality, we have that%
\begin{equation*}
\Pr (Y+1\geq \log ^{5}{n}+1)\leq \frac{\frac{1}{p}}{\log ^{5}{n}+1}\leq 
\frac{1}{\log ^{4}{n}}
\end{equation*}

But by part (2) of Lemma~\ref{lemmaring}, the probability that there are
more than $\log ^{7}{n}$ births in the clique before the Markov chain is
absorbed is by Boole's inequality at most%
\begin{equation*}
\log ^{2}{n}\Pr (Y\geq \log ^{5}{n})+O\left( \frac{1}{\log {n}}\right) \leq
O\left( \frac{1}{\log {n}}\right)
\end{equation*}%
which concludes the proof of the lemma.
\end{proof}

\medskip

The following lemma states that it is highly unlikely that the clique will
affect the outer ring, or that the number of black vertices in the clique
will reach $n$.

\begin{lemma}
\label{lemmaforA3A4} Given that we start with a single 
black vertex on the ring, the probability of absorption at $A_3$ or~$A_4$ is at
most $O\left( \frac{1}{\log{n}} \right)$.
\end{lemma}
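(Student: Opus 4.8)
The plan is to union-bound over the births in the clique, using Lemma~\ref{lemmaforborn} to control their number and Corollary~\ref{interpretation} to control what each one can achieve. First I would apply Lemma~\ref{lemmaforborn}: with probability $1-O(\frac1{\log n})$ the chain $\mathcal{M}_2$ undergoes at most $\log^7 n$ births in the clique before absorption. It then suffices to show that, conditioned on this event, absorption at $A_3$ or $A_4$ occurs with probability $o(\frac1{\log n})$.

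The structural point is that every black clique vertex is, through the offspring relation (Definition~\ref{offspring}), a descendant of some birth (Definition~\ref{birth}): a clique vertex can turn black only by being infected from its black ring neighbour (a birth) or from a black clique vertex, so tracing the history back always ends at a birth. Hence at every time step the black clique vertices split into the ``lineages'' of the at most $\log^7 n$ births. Absorption at $A_3$ makes the clique fully black, so by pigeonhole some lineage must reach $\frac{n}{\log^7 n}$ black clique vertices; absorption at $A_4$ is, by definition, a black clique vertex turning a ring vertex black, i.e.\ some lineage ``touches the ring''. So on the conditioned event, $\{A_3\text{ or }A_4\}$ is contained in the event that some birth's lineage reaches $\frac{n}{\log^7 n}$ black clique vertices or touches the ring.

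I would bound the probability of the latter for a fixed birth by $O(\frac1n)$, uniformly over the history up to that birth, and then sum over the $\le\log^7 n$ births to obtain $O(\frac{\log^7 n}{n})=o(\frac1{\log n})$; combined with the $O(\frac1{\log n})$ loss from Lemma~\ref{lemmaforborn} this yields the lemma. The per-lineage estimate is exactly Corollary~\ref{interpretation} (which needs $r\in(1,\frac43)$, assumed throughout), applied after a coupling: from the moment of a birth, the size of its lineage inside the clique is stochastically dominated by a fresh copy of the relaxed chain of Lemma~\ref{lemmaforhclique}, whose $k$-step has down/up drift ratio $\ge\frac4{3r}>1$ by~(\ref{beta-alpha}) and escape ratio $\le\frac7{6n}$ by~(\ref{gamma-alpha}), and for which Corollary~\ref{interpretation} gives exactly the $O(\frac1n)$ bound.

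The main obstacle is this last coupling/domination. One has to verify that the drift bounds~(\ref{beta-alpha})--(\ref{gamma-alpha}), computed in the states $S_k$ where the ring is all white, persist in the presence of the (at most $\log n$, by $A_1$) black ring vertices and of the black clique vertices of other lineages — intuitively these only remove white targets and can only shrink a given lineage, while the black ring arc contributes only lower-order terms since $\log n=o(n)$ — and that the lineage decomposition is genuinely additive, so that ``every lineage stays below $\frac{n}{\log^7 n}$'' forces fewer than $n$ black clique vertices and thus rules out $A_3$. It is convenient here to rerun Lemma~\ref{lemmaforhclique} with threshold $\frac{n}{2\log^7 n}$ in place of $\frac{n}{\log^7 n}$, which leaves the $O(\frac1n)$ bound intact and gives a strict inequality in the pigeonhole step.
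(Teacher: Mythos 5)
Your proposal follows essentially the same route as the paper: give each birth in the clique a distinct label (at most $\log^7 n$ of them, up to an $O(\frac{1}{\log n})$ failure probability, by Lemma~\ref{lemmaforborn}), observe that absorption at $A_3$ or $A_4$ forces some label either to occupy at least $\frac{n}{\log^7 n}$ clique vertices or to affect the ring, bound this per label by Corollary~\ref{interpretation}, and take a union bound over labels. The coupling you single out as the main obstacle is exactly the step the paper dispatches in one line: if all black vertices not carrying the focal label had fitness $1$, the focal lineage would behave as in Corollary~\ref{interpretation}, and raising those fitnesses to $r$ can only decrease the probability that this label ``wins,'' so no separate accounting for the black ring arc (and no lowering of the threshold to $\frac{n}{2\log^7 n}$) is needed.
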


\begin{proof}
For the purposes of the proof, we assign to each birth in the clique a
distinct label. Notice that, by Lemma~\ref{lemmaforborn}, we will use at
most $\log ^{7}{n}$ labels with probability at least~${1-O\left( \frac{1}{%
\log {n}}\right)}$. If we end up using more than $\log ^{7}{n}$ labels
(which happens with probability at most~${O\left( \frac{1}{\log {n}}\right)}$
by Lemma~\ref{lemmaforborn}), then we stop the process and assume that we
have reached one of the absorbing states. Furthermore, whenever a black
vertex $v$ in the clique with label $i$ replaces one of its neighbors with
an offspring, then the label of $v$ is inherited to its offspring.

In order for $\mathcal{M}_2$ to reach absorption at $A_3$, the clique must
have $n$ black vertices. Since each of these vertices has a label $j \in
[\log^7{n}]$, there exists at least one label $i$ such that at least $\frac{n%
}{\log^7{n}}$ vertices have label $i$. Similarly, if $\mathcal{M}_2$ reaches
absorption at $A_4$ and $v$ is the corresponding affected ring vertex, then
there exists a label $i$, such that $v$ has label $i$. We will call a label $%
i$ \emph{winner} if there are at least $\frac{n}{\log^7{n}}$ vertices in the
clique that have label $i$, or the outer ring is affected by a clique vertex
of label $i$. Clearly, if $\mathcal{M}_2$ reaches absorption at $A_3$ of $%
A_4 $, there must be at least one winner.

Recall that, by Corollary~\ref{interpretation}, the probability that a
single black vertex in the clique either reaches $\frac{n}{\log ^{7}{n}}$
offsprings or affects the outer ring is at most $\frac{7}{6n\left( \frac{4}{%
3r}-1\right) }+o\left( \frac{1}{n}\right) $. Consider now a particular label 
$i$. Then, if all the other black vertices of the graph that do not have
label $i$ (i.e.~black ring vertices or black clique vertices with label $%
j\neq i$) had fitness 1, then the probability that $i$ becomes a winner is
by Corollary~\ref{interpretation} at most $\frac{7}{6n\left( \frac{4}{3r}%
-1\right) }+o\left( \frac{1}{n}\right) $. The fact that the other black
vertices that do not have label $i$ have fitness $r$ can only reduce the
probability that $i$ becomes a winner. Therefore, considering all different
labels $i\in \lbrack \log ^{7}{n}]$ and using Boole's inequality, we
conclude that the probability to reach absorption at $A_{3}$ or $A_{4}$ is
at most 
\begin{equation*}
\log ^{7}{n}\left( \frac{7}{6n\left( \frac{4}{3r}-1\right) }+o\left( \frac{1%
}{n}\right) \right) +O\left( \frac{1}{\log {n}}\right) =O\left( \frac{1}{%
\log {n}}\right)
\end{equation*}%
where the term $O\left( \frac{1}{\log {n}}\right) $ in the left side
corresponds to the probability that we have more than $\log ^{7}{n}$ labels.
\end{proof}

\medskip

Finally, the following theorem concerns the probability of absorption of $\mathcal{M}_2$.

\begin{theorem}
\label{bound-M2-thm}For $n$ large, given that we start with a single black
vertex on the ring, the probability that $\mathcal{M}_{2}$ is absorbed at $%
A_{1}$ is at most $(1+o(1))\left( 1-\frac{1}{r}\right) $. Furthermore, the
probability of absorption at $A_{2}$, $A_{3}$, or $A_{4}$ is at most $%
O\left( \frac{1}{\log {n}}\right) $.
\end{theorem}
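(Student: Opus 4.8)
The plan is to assemble the pieces already established about the modified Markov chain $\mathcal{M}_2$ in Lemmas~\ref{lemmaring}, \ref{lemmaforborn}, and~\ref{lemmaforA3A4}. Recall that $\mathcal{M}_2$ has exactly four absorbing events $A_1,A_2,A_3,A_4$ (from the domination statement $D_2$), so starting from a single black ring vertex the process is absorbed at exactly one of them with probability $1$. The two claims of the theorem then follow by bounding the probabilities of these events separately.

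First I would handle the $A_2,A_3,A_4$ part, since that is essentially immediate. By Lemma~\ref{lemmaforA3A4}, the probability of absorption at $A_3$ or $A_4$ is $O(1/\log n)$. For $A_2$, recall that absorption at $A_2$ means that more than $\Theta(\log^2 n)$ ring steps occur before reaching $\log n$ black ring vertices; by part~(2) of Lemma~\ref{lemmaring} this has probability $O(1/\log n)$. Combining these via Boole's inequality gives that the probability of absorption at $A_2$, $A_3$, or $A_4$ is $O(1/\log n)$, which is the second statement of the theorem.

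For the first statement, absorption at $A_1$ means the outer ring reaches $\log n$ black vertices before any of the events $A_2,A_3,A_4$ occurs. By Proposition~\ref{propositionsinglearc}, before absorption the black ring vertices form a single consecutive arc, so the number of black ring vertices performs the birth-death walk $\mathcal{B}_1$ described in Lemma~\ref{lemmaring}, which has forward bias exactly $r$ (conditioned on no absorption at $A_4$). Part~(1) of Lemma~\ref{lemmaring} then says the probability that this walk reaches $\log n$ black ring vertices before absorption at $A_2,A_3,A_4$ is at most $\frac{1-\frac1r}{1-(1/r)^{\log n}}$. Since $(1/r)^{\log n}\to 0$ as $n\to\infty$ (as $r>1$), this bound is $(1+o(1))(1-\frac1r)$, which is exactly the first claim.

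I do not anticipate a serious obstacle here: the theorem is a clean repackaging of the three lemmas, and the only care needed is to make sure the conditioning events line up — specifically, that ``reaching $\log n$ black ring vertices'' as analyzed via the birth-death process $\mathcal{B}_1$ is precisely the event of absorption at $A_1$, and that the ``bad'' events excluded in the $\mathcal{B}_1$ analysis are exactly $A_2,A_3,A_4$. The potentially delicate point is that part~(1) of Lemma~\ref{lemmaring} already bounds the $A_1$-probability conditioned on the process behaving as a birth-death chain, which is guaranteed by $D_1$ and the $A_4$ absorption rule, so no additional estimate is required; one simply notes $(1/r)^{\log n} = n^{-\ln r} = o(1)$ to pass from the finite-$n$ bound to the asymptotic statement.
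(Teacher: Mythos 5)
Your proposal is correct and follows essentially the same route as the paper: the paper's proof likewise reads off the $A_1$ and $A_2$ bounds from Lemma~\ref{lemmaring} (parts (1) and (2), respectively) and the $A_3$, $A_4$ bounds from Lemma~\ref{lemmaforA3A4}, with the observation that $(1/r)^{\log n}=o(1)$ turning the birth-death bound into $(1+o(1))\left(1-\frac{1}{r}\right)$. Your extra care about the conditioning matching the events $A_2,A_3,A_4$ is consistent with how the lemmas are stated, so nothing further is needed.
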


\begin{proof}
The bounds on the absorption at $A_{1}$ or $A_{2}$ follow from Lemma~\ref%
{lemmaring}, while the bounds on absorption at $A_{3}$ or $A_{4}$ follow
from Lemma~\ref{lemmaforA3A4}.
\end{proof}

\medskip

Recall now that $\mathcal{M}_{2}$ (the modified Markov chain) dominates $%
\mathcal{M}_{1}$ (the original Markov chain). Furthermore, recall that the
clique-wheel graph $G_{n}$ has $n$ clique vertices and $n$ ring vertices, and thus
the fixation probability of $G_{n}$ is $f_{G_{n}}=\frac{h_{clique}+h_{ring}}{%
2}$. Therefore, the next theorem is implied by Theorem~\ref{bound-M2-thm}
and Lemma~\ref{lemmaforhclique}.

\begin{theorem}
\label{bound-fixation-thm}For the Markov chain $\mathcal{M}_{1}$, and any $%
r\in \left( 1,\frac{4}{3}\right) $, $h_{ring}\leq (1+o(1))\left( 1-\frac{1}{r%
}\right) $. Therefore, as $n\rightarrow \infty $, the fixation probability
of the clique-wheel graph $G_{n}$ in Figure~\ref{axinoswithring} is by Lemma~%
\ref{lemmaforhclique}%
\begin{equation}
f_{G_{n}}\leq \frac{1}{2}\left( 1-\frac{1}{r}\right) +o(1)
\end{equation}
\end{theorem}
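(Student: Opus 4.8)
The plan is to combine the two ingredients that have already been assembled in this section. First, I would recall that the clique-wheel graph $G_n$ consists of $n$ clique vertices and $n$ ring vertices, so by symmetry the fixation probability decomposes as $f_{G_n}=\frac{1}{2}(h_{clique}+h_{ring})$. Lemma~\ref{lemmaforhclique} already gives $h_{clique}\leq \frac{7}{6n(\frac{4}{3r}-1)}+o(\frac{1}{n})=o(1)$ for every $r\in(1,\frac{4}{3})$, so the clique term contributes nothing in the limit $n\to\infty$. The whole burden is therefore to show $h_{ring}\leq(1+o(1))(1-\frac{1}{r})$.

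For the ring term, the key point is the domination chain $\mathcal{M}_2$ dominates $\mathcal{M}_1$, established through the domination statements $D_1$ and $D_2$: forbidding the clique-to-ring "kill" move and forcing fixation at each of the events $A_1,\dots,A_4$ can only increase the probability of reaching fixation starting from a single black ring vertex. Hence $h_{ring}$ is at most the probability that $\mathcal{M}_2$ gets absorbed at one of $A_1,A_2,A_3,A_4$. Then I would invoke Theorem~\ref{bound-M2-thm}, which bounds the probability of absorption at $A_1$ by $(1+o(1))(1-\frac{1}{r})$ (this is the genuine "ring dynamics" contribution, coming from Lemma~\ref{lemmaring}(1), since the ring steps form a birth-death process with forward bias $r$ and we only need $\log n$ black ring vertices), and the probability of absorption at $A_2\cup A_3\cup A_4$ by $O(\frac{1}{\log n})=o(1)$ (Lemma~\ref{lemmaring}(2) for $A_2$, Lemma~\ref{lemmaforA3A4} for $A_3,A_4$). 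Adding these gives $h_{ring}\leq(1+o(1))(1-\frac{1}{r})+o(1)=(1+o(1))(1-\frac{1}{r})$.

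Finally I would assemble: since $h_{clique}=o(1)$ and $h_{ring}\leq(1+o(1))(1-\frac{1}{r})$, we get
\begin{equation*}
f_{G_n}=\frac{h_{clique}+h_{ring}}{2}\leq \frac{1}{2}\left(1-\frac{1}{r}\right)+o(1)
\end{equation*}
as $n\to\infty$, which is exactly the claimed bound. The only subtlety to double-check is that all the $o(1)$ and $O(\frac{1}{\log n})$ error terms from the various lemmas are uniform enough to be collected into a single $o(1)$ — but since $r$ is fixed in the range $(1,\frac{4}{3})$ throughout, the constants depend only on $r$ and this causes no trouble. I do not expect any real obstacle here: the theorem is essentially a bookkeeping corollary of Theorem~\ref{bound-M2-thm}, Lemma~\ref{lemmaforhclique}, and the domination relation $\mathcal{M}_2\succeq\mathcal{M}_1$; the genuine work was done in proving those. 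If anything, the step most worth stating carefully is the reduction $h_{ring}\leq \Pr[\mathcal{M}_2\text{ absorbed}]$ via domination, so that the reader sees why the artificial absorbing states $A_1,\dots,A_4$ only help.
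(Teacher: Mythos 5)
Your proposal is correct and follows essentially the same route as the paper: the paper also deduces the theorem directly from the decomposition $f_{G_{n}}=\frac{h_{clique}+h_{ring}}{2}$, the domination of $\mathcal{M}_{1}$ by $\mathcal{M}_{2}$, Theorem~\ref{bound-M2-thm} for the absorption probabilities at $A_{1}$ versus $A_{2},A_{3},A_{4}$, and Lemma~\ref{lemmaforhclique} for the clique term. Your added remark about spelling out the domination step $h_{ring}\leq \Pr[\mathcal{M}_{2}\text{ absorbed at }A_{1},\ldots,A_{4}]$ is a fair point of presentation but not a deviation in substance.
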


\section{Analysis of the aggregation model\label{parallel-model-sec}}

In this section, we provide analytic results on the new evolutionary
model of mutual influences. More specifically, in Section~\ref%
{potential-subsec} we prove that this model admits a \emph{potential function} 
for arbitrary undirected graphs and arbitrary initial fitness vector,
which implies that the corresponding dynamic system converges to a stable
state. Furthermore, in Section~\ref{clique-mutual-subsec} we prove fast
convergence of the dynamic system for the case of a complete graph, as well
as we provide almost tight upper and lower bounds on the limit fitness, to
which the system converges.

\subsection{Potential and convergence in general undirected graphs\label{potential-subsec}}

In the following theorem we prove convergence of the new model of mutual
influences using a potential function.

\begin{theorem}
\label{undir-conv-thm}Let $G=(V,E)$ be a connected undirected graph. Let $%
r(0)$ be an initial fitness vector of~$G$, and let $r_{\min }$ and $r_{\max}$ 
be the smallest and the greatest initial fitness in $r(0)$,
respectively. Then, in the model of mutual influences, the fitness vector $%
r(k)$ converges to a vector $[r_{0},r_{0},\ldots ,r_{0}]^{T}$ as~$%
k\rightarrow \infty $, for some value $r_{0}\in \lbrack r_{\min },r_{\max }]$.
\end{theorem}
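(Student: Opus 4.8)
The plan is to read the recursion $r(k+1)=P(k)\,r(k)$ of~(\ref{recursion-P-compact})--(\ref{P-exact}) as a product of time-varying \emph{row-stochastic} matrices and to use as potential function the ``disagreement'' $\Phi(k):=M(k)-m(k)\ge 0$, where $M(k):=\max_i r_{u_i}(k)$ and $m(k):=\min_i r_{u_i}(k)$. Since each row of $P(k)$ is a probability vector supported on $\{u_i\}\cup N(u_i)$, the value $r_{u_i}(k+1)$ is a convex combination of the fitnesses at step $k$; hence $M(\cdot)$ is non-increasing, $m(\cdot)$ is non-decreasing, $\Phi(\cdot)$ is non-increasing, and $\Phi(k)=0$ precisely when $r(k)$ is a constant vector. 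It therefore suffices to prove $\Phi(k)\to 0$: then $M(k)\downarrow M^{\ast}$ and $m(k)\uparrow m^{\ast}$ with $M^{\ast}-m^{\ast}=\lim_k\Phi(k)=0$, so both limits coincide with a common value $r_0\in[m(0),M(0)]\subseteq[r_{\min},r_{\max}]$, and the sandwich $m(k)\le r_{u_i}(k)\le M(k)$ forces $r(k)\to[r_0,\dots,r_0]^{T}$.

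First I would establish \emph{uniform} positivity bounds on the entries of $P(k)$. By Observation~\ref{convex-bound-parallel-obs} every fitness stays in $[r_{\min},r_{\max}]$ with $r_{\min}>0$, hence $n\,r_{\min}\le\Sigma(k)\le n\,r_{\max}$ for all $k$; consequently $P_{ij}(k)=\frac{r_{u_j}(k)}{\deg(u_j)\,\Sigma(k)}\ge\frac{r_{\min}}{(n-1)\,n\,r_{\max}}>0$ for every edge $u_iu_j\in E$, and moreover $P_{ii}(k)=1-\sum_{j\in N(u_i)}\frac{r_{u_j}(k)}{\deg(u_j)\Sigma(k)}\ge\frac{r_{u_i}(k)}{\Sigma(k)}\ge\frac{r_{\min}}{n\,r_{\max}}>0$, where the first inequality uses $\sum_{j\in N(u_i)}\frac{r_{u_j}(k)}{\deg(u_j)}\le\sum_{j\ne i}r_{u_j}(k)=\Sigma(k)-r_{u_i}(k)$. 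Thus there is a constant $\eta=\eta(G,r_{\min},r_{\max})>0$, \emph{independent of $k$}, such that all diagonal entries and all ``edge'' off-diagonal entries of $P(k)$ are $\ge\eta$ (and the remaining off-diagonal entries vanish).

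Next, let $D$ be the diameter of $G$. Because $G$ is connected and each $P(k)$ has a strictly positive diagonal, any two vertices are joined by a walk of length exactly $D$ (a shortest path padded with self-loops), so every entry of the $D$-step product $Q_k:=P(k+D-1)\cdots P(k)$ is $\ge\eta^{D}=:\varepsilon>0$, uniformly in $k$. For a stochastic matrix all of whose entries are $\ge\varepsilon$, the standard coefficient-of-ergodicity estimate yields $\max_i(Q_k x)_i-\min_i(Q_k x)_i\le(1-n\varepsilon)\,(\max_i x_i-\min_i x_i)$ for every vector $x$: writing $(Q_k x)_i-(Q_k x)_{i'}=\sum_\ell ((Q_k)_{i\ell}-(Q_k)_{i'\ell})\,x_\ell$ and splitting the coefficient into its positive and negative parts, the positive part has total mass $1-\sum_\ell\min((Q_k)_{i\ell},(Q_k)_{i'\ell})\le 1-n\varepsilon$. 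Taking $x=r(k)$ gives $\Phi(k+D)\le(1-n\varepsilon)\,\Phi(k)$ with $1-n\varepsilon\in[0,1)$; since $\Phi$ is non-increasing along every single step, this forces geometric decay $\Phi(k)\to 0$, which by the first paragraph completes the proof.

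The step I expect to be the main obstacle is the \emph{uniformity}. Because the recursion is non-linear (Observation~\ref{quadratic-obs}), the matrix $P(k)$ genuinely changes at every step, so one cannot simply invoke Perron--Frobenius (primitivity) for a single fixed matrix; one must instead use the weak-ergodicity machinery for \emph{products} of stochastic matrices, and that machinery requires the entries of $P(k)$ to be bounded away from $0$ uniformly in $k$. Securing this is exactly where Observation~\ref{convex-bound-parallel-obs} is essential---it pins every fitness, and hence $\Sigma(k)$, inside a fixed compact interval. The rest (the diameter padding, the contraction estimate, and the final squeeze) is routine.
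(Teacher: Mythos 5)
Your proof is correct, but it follows a genuinely different route from the paper. The paper's argument is Lyapunov-style: it defines the potential $\phi(k)=\sum_{u\in V} r_u(k)/\deg(u)$, derives from the recursion the exact identity $\phi(k+1)-\phi(k)=\frac{1}{\Sigma(k)}\sum_{uv\in E}\frac{(r_u(k)-r_v(k))^2}{\deg(u)\deg(v)}\ge 0$, notes that $\phi$ is bounded above, concludes that the increments (hence all edge disagreements) tend to $0$, and uses connectivity to get a common limit in $[r_{\min},r_{\max}]$. You instead view the dynamics as an inhomogeneous product of row-stochastic matrices, prove uniform-in-$k$ lower bounds on the diagonal and edge entries (your bound $P_{ii}(k)\ge r_{u_i}(k)/\Sigma(k)$ is a nice touch, since it also justifies the convex-combination claim behind Observation~\ref{convex-bound-parallel-obs}), pad shortest paths with self-loops to make every entry of each diameter-length product at least $\eta^{D}$, and apply the standard ergodic-coefficient contraction to the max--min disagreement $\Phi(k)$. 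What each buys: the paper's proof is shorter, exhibits the potential function the authors explicitly advertise, and its exact increment identity is what drives the rest of their analysis; your proof yields something the paper's general theorem does not, namely an explicit (albeit very weak, of order $1-n\eta^{D}$ per $D$ steps) geometric convergence rate for arbitrary connected graphs, and the monotone squeeze $m(k)\uparrow r_0\downarrow M(k)$ makes the existence of the limit vector especially transparent. Both arguments rely on the model's implicit assumption $r_{\min}>0$ (the paper needs it for $\Sigma(k)>0$ and stochasticity of $P(k)$; you state it explicitly), so this is not a gap relative to the paper.
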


\begin{proof}
Denote the vertices of $G$ by $V=\{u_{1},u_{2},\ldots ,u_{n}\}$. Let $k\geq
0 $. Then (\ref{P-exact}) implies that for any $i=1,2,\ldots ,n$, the
element $r_{u_{i}}(k+1)$ of the vector $r(k+1)$ is 
\begin{eqnarray*}
r_{u_{i}}(k+1) &=&\frac{1}{\Sigma (k)}\sum_{u_{j}\in N(u_{i})}\frac{%
r_{u_{j}}(k)}{\deg (u_{j})}\cdot r_{u_{j}}(k)+(1-\frac{1}{\Sigma (k)}%
\sum_{u_{j}\in N(u_{i})}\frac{r_{u_{j}}(k)}{\deg (u_{j})})\cdot r_{u_{i}}(k)
\\
&=&r_{u_{i}}(k)+\frac{1}{\Sigma (k)}\sum_{u_{j}\in
N(u_{i})}r_{u_{j}}(k)\cdot \frac{r_{u_{j}}(k)-r_{u_{i}}(k)}{\deg (u_{j})}
\end{eqnarray*}%
and thus%
\begin{equation}
\frac{r_{u_{i}}(k+1)}{\deg (u_{i})}=\frac{r_{u_{i}}(k)}{\deg (u_{i})}+\frac{1%
}{\Sigma (k)}\sum_{u_{j}\in N(u_{i})}r_{u_{j}}(k)\cdot \frac{%
r_{u_{j}}(k)-r_{u_{i}}(k)}{\deg (u_{i})\deg (u_{j})}
\label{ri-undir-explicit-2}
\end{equation}%
Therefore, by summing up the equations in (\ref{ri-undir-explicit-2}) for
every $i=1,2,\ldots ,n$ it follows that%
\begin{eqnarray}
\sum_{u_{i}\in V}\frac{r_{u_{i}}(k+1)}{\deg (u_{i})} &=&\sum_{u_{i}\in V}%
\frac{r_{u_{i}}(k)}{\deg (u_{i})}+\frac{1}{\Sigma (k)}\sum_{u_{i}u_{j}\in E}%
\frac{(r_{u_{j}}(k)-r_{u_{i}}(k))^{2}}{\deg (u_{i})\deg (u_{j})}
\label{ri-undir-explicit-3} \\
&\geq &\sum_{u_{i}\in V}\frac{r_{u_{i}}(k)}{\deg (u_{i})}  \notag
\end{eqnarray}%
Define now the potential function $\phi (k)=\sum_{u_{i}\in V}\frac{%
r_{u_{i}}(k)}{\deg (u_{i})}$ for every iteration $k\geq 0$ of the process.
Note by Observation~\ref{convex-bound-parallel-obs} that $\Sigma
(k)=\sum_{u_{i}\in V}r_{u_{i}}(k)\leq nr_{\max }$ is a trivial upper bound
for $\Sigma (k)$. Therefore, (\ref{ri-undir-explicit-3}) implies that%
\begin{eqnarray}
\phi (k+1)-\phi (k) &=&\frac{1}{\Sigma (k)}\sum_{u_{i}u_{j}\in E}\frac{%
(r_{u_{j}}(k)-r_{u_{i}}(k))^{2}}{\deg (u_{i})\deg (u_{j})}  \label{Delta-phi-1} \\
&\geq &\frac{1}{nr_{\max }}\sum_{u_{i}u_{j}\in E}\frac{%
(r_{u_{j}}(k)-r_{u_{i}}(k))^{2}}{\deg (u_{i})\deg (u_{j})} 
\ > \ \frac{1}{n^{3}r_{\max }}\sum_{u_{i}u_{j}\in E}(r_{u_{j}}(k)-r_{u_{i}}(k))^{2} \notag
\end{eqnarray}%
Furthermore, note that $r_{\max }\cdot \sum_{u_{i}\in V}\frac{1}{\deg (u_{i})%
}\leq nr_{\max }$ is a trivial upper bound for $\phi (k)$. Therefore, since $%
\phi (k+1)\geq \phi (k)$ for every $k\geq 0$ by (\ref{ri-undir-explicit-3}),
it follows that $\phi (k)$ converges to some value $\phi _{0}$ as $%
k\rightarrow \infty $, where $\phi (0)\leq \phi _{0}\leq nr_{\max }$.
Consider now an arbitrary $\varepsilon >0$ and let $\varepsilon ^{\prime }=%
\frac{\varepsilon ^{2}}{n^{3}r_{\max }}$. Then, since $\phi (k)%
\stackbin[k\rightarrow \infty]{}{\longrightarrow}\phi _{0}$, there exists $%
k_{0}\in \mathbb{N}$, such that $|\phi (k+1)-\phi (k)|<\varepsilon ^{\prime
} $ for every $k\geq k_{0}$. Therefore, (\ref{Delta-phi-1}) implies that for
every edge $u_{i}u_{j}\in E$ of $G$ and for every $k\geq k_{0}$,\vspace{-0.2cm}%
\begin{eqnarray}
(r_{u_{j}}(k)-r_{u_{i}}(k))^{2} &\leq &\sum_{u_{p}u_{q}\in
E}(r_{u_{p}}(k)-r_{u_{q}}(k))^{2}  \label{ri-undir-explicit-4} \\
&\leq &n^{3}r_{\max}\cdot|\phi (k+1)-\phi(k)| \ \leq \ n^{3}r_{\max }\cdot \varepsilon^{\prime} \ = \ \varepsilon^{2} \notag
\end{eqnarray}%
Thus, for every $\varepsilon >0$, there exists $k_{0}\in \mathbb{N}$, such
that $|r_{u_{j}}(k)-r_{u_{i}}(k)|<\varepsilon $ for every $k\geq k_{0}$ and
for every edge $u_{i}u_{j}\in E$ of $G$. Therefore, since $G$ is assumed to
be connected, all values $r_{u}(k)$, where $u\in V$, converge to the same
value $r_{0}$ as $k\rightarrow \infty $. Furthermore, since $r_{u}(k)\in
\lbrack r_{\min },r_{\max }]$ by Observation~\ref{convex-bound-parallel-obs}, 
it follows that $r_{0}\in \lbrack r_{\min },r_{\max }]$ as well. This
completes the proof of the theorem.
\end{proof}

\subsection{Analysis of the complete graph\label{clique-mutual-subsec}}

The next theorem provides an almost tight analysis for the limit fitness
value $r_{0}$ and the convergence time to this value, in the case of a
complete graph (i.e.~a homogeneous population).

\begin{theorem}
\label{r0-clique-thm}Let $G=(V,E)$ be the complete graph with $n$ vertices
and $\varepsilon >0$. Let $\alpha \in [0,1]$ be the portion of
initially introduced mutants with relative fitness $r\geq 1$ in $G$, and let 
$r_{0}$ be the limit fitness of $G$. Then $|r_{u}(k)-r_{v}(k)|<\varepsilon $
for every $u,v\in V$, when\vspace{-0.1cm}%
\begin{equation}
k\geq (n-2)\cdot \ln (\frac{r-1}{\varepsilon })\vspace{-0.1cm} \notag
\end{equation}%
Furthermore, for the limit fitness $r_{0}$,\vspace{-0.1cm}%
\begin{equation}
r_{0}\leq 1+\alpha (r-1)+\frac{\alpha (1-\alpha )}{1+\alpha (r-1)}\cdot 
\frac{(r-1)^{2}}{2}  \label{r0-upper-statement}
\end{equation}%
and
\begin{eqnarray}
r_{0} &\geq &\frac{1+\alpha (r-1)+\sqrt{(1+\alpha (r-1))^{2}+2\alpha
(1-\alpha )(r-1)^{2}}}{2}  \label{r0-lower-statement} \\
&\geq &1+\alpha (r-1)  \notag
\end{eqnarray}
\end{theorem}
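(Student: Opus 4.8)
The plan is to exploit the symmetry of the complete graph. Since $r(0)$ takes only two values and the update matrix~(\ref{P-clique}) is invariant under permuting vertices of equal fitness, an easy induction on $k$ shows that at every iteration all $\alpha n$ mutants share a common fitness $a(k)$ and all $(1-\alpha)n$ residents a common fitness $b(k)$, with $a(0)=r$ and $b(0)=1$. Writing $\bar r(k)=\alpha a(k)+(1-\alpha)b(k)=\Sigma(k)/n$ for the average fitness and specializing the explicit update from the proof of Theorem~\ref{undir-conv-thm} to $\deg(u_i)=n-1$, the $n$-dimensional recursion~(\ref{recursion-P-compact}) collapses to
\begin{eqnarray*}
a(k+1)&=&a(k)-\frac{(1-\alpha)\,b(k)}{(n-1)\,\bar r(k)}\,(a(k)-b(k)),\\
b(k+1)&=&b(k)+\frac{\alpha\,a(k)}{(n-1)\,\bar r(k)}\,(a(k)-b(k)).
\end{eqnarray*}

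First I would track the gap $d(k):=a(k)-b(k)$, with $d(0)=r-1$. Subtracting the two lines, the coefficient of $d(k)$ is $1-\frac{\alpha a(k)+(1-\alpha)b(k)}{(n-1)\bar r(k)}=1-\frac{1}{n-1}$; the \emph{nonlinearity cancels exactly}, so $d(k+1)=\frac{n-2}{n-1}\,d(k)$ and hence $d(k)=(r-1)\bigl(\frac{n-2}{n-1}\bigr)^{k}$. Since for any $u,v$ the quantity $|r_{u}(k)-r_{v}(k)|$ is either $0$ or $|d(k)|$, solving $(r-1)\bigl(\frac{n-2}{n-1}\bigr)^{k}<\varepsilon$ by taking logarithms gives the stated iteration bound, and this also re-derives convergence (Theorem~\ref{undir-conv-thm}) for the clique.

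Then I would exploit monotonicity of the average. Summing the two recursions (or, equivalently, tracking $\Sigma(k)$ via the identity $n\sum_i x_i^2-(\sum_i x_i)^2=\sum_{i<j}(x_i-x_j)^2$, where the $\alpha(1-\alpha)n^2$ mutant--resident pairs each contribute $d(k)^2$) yields the exact identity $\bar r(k+1)=\bar r(k)+\frac{\alpha(1-\alpha)\,d(k)^2}{(n-1)\,\bar r(k)}$. Thus $\bar r(k)$ is nondecreasing and, by Observation~\ref{convex-bound-parallel-obs}, bounded; so $r_0=\lim_k\bar r(k)$ exists and $r_0\ge\bar r(0)=1+\alpha(r-1)$, the crude bound in~(\ref{r0-lower-statement}). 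Telescoping the identity over $k\ge 0$ and substituting $d(k)^2=(r-1)^2\bigl(\frac{n-2}{n-1}\bigr)^{2k}$ together with $\sum_{k\ge 0}\bigl(\frac{n-2}{n-1}\bigr)^{2k}=\frac{(n-1)^2}{2n-3}$ gives
\[
r_0-\bigl(1+\alpha(r-1)\bigr)=\sum_{k\ge 0}\frac{\alpha(1-\alpha)(r-1)^2}{(n-1)\,\bar r(k)}\Bigl(\tfrac{n-2}{n-1}\Bigr)^{2k}.
\]
Using $\bar r(k)\ge\bar r(0)=1+\alpha(r-1)$ in the denominator together with $\frac{n-1}{2n-3}\to\frac12$ bounds the right-hand side by $\frac{\alpha(1-\alpha)(r-1)^2}{2(1+\alpha(r-1))}$, which is~(\ref{r0-upper-statement}); using instead $\bar r(k)\le r_0$ and $\frac{n-1}{2n-3}\ge\frac12$ gives $r_0\bigl(r_0-1-\alpha(r-1)\bigr)\ge\frac{\alpha(1-\alpha)(r-1)^2}{2}$, and solving this quadratic in $r_0$ produces exactly the sharper lower bound in~(\ref{r0-lower-statement}).

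The calculation has no individually hard step; the whole difficulty is in spotting the two invariants that linearize the dynamics --- that $d(k)$ contracts by the \emph{constant} factor $\frac{n-2}{n-1}$ no matter the current fitnesses, and that $\bar r(k)$ is monotone with an exactly summable increment. After that it is a geometric series, and the only genuine slack is the factor $\frac{n-1}{2n-3}$ versus $\frac12$ (at most $\frac23$ for all $n\ge 3$ and tending to $\frac12$), which is precisely why the two bounds on $r_0$ are only \emph{almost} tight rather than matching.
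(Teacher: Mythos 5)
Your proposal is correct and follows essentially the same route as the paper's own proof: the symmetry reduction to two fitness values, the exact contraction $d(k+1)=\frac{n-2}{n-1}d(k)$, the monotone increment identity for the weighted average with the geometric sum $\frac{(n-1)^2}{2n-3}$, and the quadratic inequality for the sharper lower bound all appear verbatim in the paper. Even your asymptotic treatment of the factors $\frac{n-1}{2n-3}\to\frac12$ and $\ln\bigl(1+\frac{1}{n-2}\bigr)\approx\frac{1}{n-2}$ matches the paper's "for sufficiently large $n$" reasoning, so nothing further is needed.
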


\begin{proof}
Since $G$ is symmetric, we do not distinguish among the different placements 
$S\subseteq V$ of the $\alpha n$ initially introduced mutants. Furthermore,
at every iteration $k\geq 0$, there exist by symmetry two different
fitnesses $r_{1}(k)$ and $r_{2}(k)$ for the vertices of $S$ and of $%
V\setminus S$, respectively. Thus, it suffices to compute only $r_{1}(k)$
and $r_{2}(k)$ for every $k\geq 0$. Let $\Delta (k)=r_{1}(k)-r_{2}(k)$.
Then, $\Delta (0)=r-1$. It follows now by (\ref{recursion-P}) and (\ref%
{P-clique}) that for every $k\geq 0$%
\begin{eqnarray}
r_{1}(k+1) &=&(1-\frac{(1-\alpha )nr_{2}(k)}{(n-1)\Sigma (k)})\cdot r_{1}(k)+%
\frac{(1-\alpha )nr_{2}(k)}{(n-1)\Sigma (k)}\cdot r_{2}(k)  \label{r1-clique}
\\
&=&r_{1}(k)-\Delta (k)\frac{(1-\alpha )nr_{2}(k)}{(n-1)\Sigma (k)}  \notag
\end{eqnarray}%
Similarly, 
\begin{eqnarray}
r_{2}(k+1) &=&\frac{\alpha nr_{1}(k)}{(n-1)\Sigma (k)}\cdot r_{1}(k)+(1-%
\frac{\alpha nr_{1}(k)}{(n-1)\Sigma (k)})\cdot r_{2}(k)  \label{r2-clique} \\
&=&r_{2}(k)+\Delta (k)\frac{\alpha nr_{1}(k)}{(n-1)\Sigma (k)}  \notag
\end{eqnarray}%
where $\Sigma (k)=\alpha nr_{1}(k)+(1-\alpha )nr_{2}(k)$. Subtracting now (%
\ref{r2-clique}) from (\ref{r1-clique}), it follows that 
\begin{eqnarray*}
\Delta (k+1) &=&\Delta (k)-\Delta (k)\cdot \frac{\Sigma (k)}{(n-1)\cdot
\Sigma (k)} \\
&=&\Delta (k)\frac{n-2}{n-1}
\end{eqnarray*}%
and thus, since $\Delta (0)=r-1$, it follows that for every $k\geq 0$%
\begin{equation}
\Delta (k)=(r-1)\cdot \left( \frac{n-2}{n-1}\right) ^{k}  \label{D(k)-exact}
\end{equation}%
Therefore, in particular, $\Delta (k)>0$ for every $k\geq 0$ if and only if $%
r>1$. Let now $\varepsilon >0$ be arbitrary. Then $|\Delta (k)|\leq
\varepsilon $ if and only if 
\begin{eqnarray}
\left( \frac{n-2}{n-1}\right) ^{k} &\leq &\frac{\varepsilon }{r-1}%
\Leftrightarrow  \notag \\
\left( 1+\frac{1}{n-2}\right) ^{k} &\geq &\frac{r-1}{\varepsilon }
\label{D(k)-condition}
\end{eqnarray}%
However, $\left( 1+\frac{1}{n-2}\right) ^{n-2}\rightarrow e$ as $%
n\rightarrow \infty $. Thus, for sufficiently large $n$, (\ref%
{D(k)-condition}) is satisfied when $e^{\frac{k}{n-2}}\geq \frac{r-1}{%
\varepsilon }$, or equivalently when 
\begin{equation}
k\geq (n-2)\cdot \ln (\frac{r-1}{\varepsilon })  \label{k-condition}
\end{equation}

Recall by Theorem~\ref{undir-conv-thm} that $r_{1}(k)\rightarrow r_{0}$ and $%
r_{2}(k)\rightarrow r_{0}$ for some value $r_{0}$, as $k\rightarrow \infty $%
, and thus also $\alpha r_{1}(k)+(1-\alpha )r_{2}(k)\rightarrow r_{0}$ as $%
k\rightarrow \infty $. Furthermore, it follows by (\ref{r1-clique}) and (\ref%
{r2-clique}) that%
\begin{equation}
\alpha r_{1}(k+1)+(1-\alpha )r_{2}(k+1)=\alpha r_{1}(k)+(1-\alpha )r_{2}(k)+%
\frac{\alpha (1-\alpha )}{(\alpha r_{1}(k)+(1-\alpha )r_{u_{2}}(k))}\cdot 
\frac{\Delta ^{2}(k)}{n-1}  \label{weighted-sum-clique}
\end{equation}%
That is, $\alpha r_{1}(k)+(1-\alpha )r_{2}(k)$ is a non-decreasing function
of $k$, and thus $\alpha r_{1}(k)+(1-\alpha )r_{2}(k)\geq \alpha r+(1-\alpha
)$. Therefore, for every $k\geq 0$,%
\begin{equation}
\alpha r_{1}(k)+(1-\alpha )r_{2}(k)\leq 1+\alpha (r-1)+\frac{\alpha
(1-\alpha )}{1+\alpha (r-1)}\cdot \frac{1}{n-1}\sum_{k=0}^{\infty }\Delta
^{2}(k)  \label{weighted-sum-clique-upper-1}
\end{equation}%
The sum $\sum_{k=0}^{\infty }\Delta ^{2}(k)$ can be computed by (\ref%
{D(k)-exact}) as%
\begin{equation}
\sum_{k=0}^{\infty }\Delta ^{2}(k)=(r-1)^{2}\cdot \frac{1}{1-(\frac{n-2}{n-1}%
)^{2}}=(r-1)^{2}\frac{(n-1)^{2}}{2n-3}  \label{square-sum}
\end{equation}%
Substituting now (\ref{square-sum}) into (\ref{weighted-sum-clique-upper-1}%
), it follows that%
\begin{equation}
\alpha r_{1}(k)+(1-\alpha )r_{2}(k)\leq 1+\alpha (r-1)+\frac{\alpha
(1-\alpha )}{1+\alpha (r-1)}\cdot (r-1)^{2}\frac{n-1}{2n-3}
\label{weighted-sum-clique-upper-2}
\end{equation}%
Therefore, since $\frac{n-1}{2n-3}\rightarrow \frac{1}{2}$ as $n\rightarrow
\infty $, and since $\alpha r_{1}(k)+(1-\alpha )r_{2}(k)\rightarrow r_{0}$
as $k\rightarrow \infty $, it follows by (\ref{weighted-sum-clique-upper-2})
that for sufficiently large $n$ and $k$, 
\begin{equation}
r_{0}\leq 1+\alpha (r-1)+\frac{\alpha (1-\alpha )}{1+\alpha (r-1)}\cdot 
\frac{(r-1)^{2}}{2}  \label{r0-clique-upper-bound}
\end{equation}

Recall by (\ref{weighted-sum-clique}) that $\alpha r_{1}(k)+(1-\alpha
)r_{2}(k)$ is non-decreasing on $k$, and thus $\alpha r_{1}(k)+(1-\alpha
)r_{2}(k)\leq r_{0}$. Therefore, it follows by (\ref{weighted-sum-clique})
and (\ref{square-sum}) that for every $k\geq 0$,%
\begin{equation*}
\alpha r_{1}(k)+(1-\alpha )r_{2}(k)\geq 1+\alpha (r-1)+\frac{\alpha
(1-\alpha )}{r_{0}}\cdot (r-1)^{2}\frac{n-1}{2n-3}
\end{equation*}%
Thus, since $\frac{n-1}{2n-3}\rightarrow \frac{1}{2}$ as $n\rightarrow
\infty $ and $\alpha r_{1}(k)+(1-\alpha )r_{2}(k)\rightarrow r_{0}$ as $%
k\rightarrow \infty $, it follows similarly to the above that for
sufficiently large $n$ and $k$, 
\begin{equation*}
r_{0}\geq 1+\alpha (r-1)+\frac{\alpha (1-\alpha )}{r_{0}}\cdot \frac{%
(r-1)^{2}}{2}
\end{equation*}%
and thus%
\begin{equation}
r_{0}^{2}-r_{0}(1+\alpha (r-1))-\frac{\alpha (1-\alpha )(r-1)^{2}}{2}\geq 0
\label{r0-clique-lower-bound-2}
\end{equation}%
Therefore, since $r_{0}>0$, it follows by solving the trinomial in (\ref%
{r0-clique-lower-bound-2}) that 
\begin{equation}
r_{0}\geq \frac{1+\alpha (r-1)+\sqrt{(1+\alpha (r-1))^{2}+2\alpha (1-\alpha
)(r-1)^{2}}}{2}  \label{r0-clique-lower-bound-3}
\end{equation}%
The statement of the theorem follows now by (\ref{k-condition}), (\ref%
{r0-clique-upper-bound}), and (\ref{r0-clique-lower-bound-3}).
\end{proof}

\medskip

The next corollary follows by Theorem~\ref{r0-clique-thm}. 

\begin{corollary}
\label{r0-clique-one-mutant-cor}Let ${G=(V,E)}$ be the complete graph with $n$
vertices. Suppose that initially \emph{exactly~one} mutant with relative fitness ${r\geq 1}$
is placed in $G$ and let $r_{0}$ be the limit fitness of $G$. 
Then ${1+\frac{r-1}{n}\leq r_{0}\leq 1+\frac{r^{2}-1}{2n}}$.
\end{corollary}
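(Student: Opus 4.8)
The plan is to derive Corollary~\ref{r0-clique-one-mutant-cor} directly from Theorem~\ref{r0-clique-thm} by specializing to $\alpha=\frac{1}{n}$, which is precisely the case of a single mutant among the $n$ vertices of the clique. No new machinery is needed: once $\alpha=\frac1n$ is substituted into (\ref{r0-upper-statement}) and (\ref{r0-lower-statement}), the corollary follows from two elementary estimates on the correction term.

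First I would handle the lower bound, which is immediate: the weaker of the two inequalities in (\ref{r0-lower-statement}) already gives $r_{0}\geq 1+\alpha(r-1)$, and setting $\alpha=\frac{1}{n}$ yields $r_{0}\geq 1+\frac{r-1}{n}$. For the upper bound I would substitute $\alpha=\frac{1}{n}$ into (\ref{r0-upper-statement}) to obtain
\[
r_{0}\leq 1+\frac{r-1}{n}+\frac{\frac{1}{n}\left(1-\frac{1}{n}\right)}{1+\frac{r-1}{n}}\cdot\frac{(r-1)^{2}}{2}.
\]
The key observation is that, since $r\geq 1$, we have $1-\frac{1}{n}\leq 1$ and $1+\frac{r-1}{n}\geq 1$, so the fraction multiplying $\frac{(r-1)^{2}}{2}$ is bounded above by $\frac{1}{n}$; equivalently, $\frac{\alpha(1-\alpha)}{1+\alpha(r-1)}\leq\alpha$. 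Hence
\[
r_{0}\leq 1+\frac{r-1}{n}+\frac{(r-1)^{2}}{2n}=1+\frac{2(r-1)+(r-1)^{2}}{2n}=1+\frac{(r-1)(r+1)}{2n}=1+\frac{r^{2}-1}{2n},
\]
using the factorization $2(r-1)+(r-1)^{2}=(r-1)(r+1)=r^{2}-1$. Combining the two displays gives exactly $1+\frac{r-1}{n}\leq r_{0}\leq 1+\frac{r^{2}-1}{2n}$.

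There is essentially no obstacle here, since this is a substitution together with a one-line estimate; the only point requiring care is the direction of $\frac{\alpha(1-\alpha)}{1+\alpha(r-1)}\leq\alpha$, which uses $r\geq 1$ (so that the denominator is at least $1$). If one preferred not to rely on the ``sufficiently large $n$'' step inside the proof of Theorem~\ref{r0-clique-thm}, one could instead begin from the finite-$n$ bound (\ref{weighted-sum-clique-upper-2}) and replace $\frac12$ by $\frac{n-1}{2n-3}$, which is at most $\frac23$ for $n\geq 3$, obtaining the same form of bound with a slightly larger constant; but since we are entitled to assume the statement of Theorem~\ref{r0-clique-thm} as given, the clean bound above suffices.
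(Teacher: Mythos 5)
Your proposal is correct and follows essentially the same route as the paper: substitute $\alpha=\frac{1}{n}$ into the bounds of Theorem~\ref{r0-clique-thm}, take the weaker form $r_{0}\geq 1+\alpha(r-1)$ for the lower bound, and for the upper bound use the estimate $\frac{\alpha(1-\alpha)}{1+\alpha(r-1)}\leq\alpha$ (the paper phrases this as $\frac{n-1}{r+(n-1)}\leq 1$, which is the same inequality) before factoring $2(r-1)+(r-1)^{2}=r^{2}-1$. No gaps; the argument matches the paper's own derivation.
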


\begin{proof}
Since we have initially one mutant, it follows that $\alpha =\frac{1}{n}$.
Then, substituting this value of $\alpha $ in (\ref{r0-lower-statement}), we
obtain the lower bound $r_{0}\geq 1+\frac{r-1}{n}$. For the upper bound of $%
r_{0}$, it follows by substituting $\alpha $ in (\ref{r0-upper-statement})
that%
\begin{eqnarray}
r_{0} &\leq &1+\frac{r-1}{n}+\frac{\frac{1}{n}\frac{n-1}{n}}{\frac{r}{n}+(1-%
\frac{1}{n})}\cdot \frac{(r-1)^{2}}{2}  \notag \\
&=&1+\frac{r-1}{n}(1+\frac{n-1}{r+(n-1)}\cdot \frac{r-1}{2})
\label{r0-upper-clique-one-mutant} \\
&\leq &1+\frac{r-1}{n}(1+\frac{r-1}{2})  \notag \\
&=&1+\frac{r^{2}-1}{2n}  \notag
\end{eqnarray}%
\end{proof}

\section{Invasion control mechanisms\label{antibiotics-sec}} 

As stated in the introduction of this paper, our new evolutionary model of
mutual influences can be used to model control mechanisms over invading populations in networks. 
We demonstrate this by presenting two alternative scenarios in 
Sections~\ref{antibiotics-phases-subsec} and~\ref{antibiotics-continuous-subsec}. 
In both considered scenarios, we assume that $\alpha n$ individuals of relative fitness $r$ 
(the rest being of fitness $1$) are introduced in the complete graph with $n$ vertices. 
Then, as the process evolves, we periodically choose (arbitrarily) a small fraction $\beta\in[0,1]$ of individuals 
in the current population and we reduce their current fitnesses to a value that is 
considered to correspond to the healthy state of the system (without loss of generality, 
this value in our setting is $1$). In the remainder of this section, we call these modified 
individuals as ``stabilizers'', as they help the population resist to the invasion of the mutants.

\subsection{Control of invasion in phases\label{antibiotics-phases-subsec}}

In the first scenario of controlling the invasion of advantageous mutants in networks, 
we insert stabilizers to the population in phases, as follows. 
In each phase $k\geq 1$, we let the process evolve until all fitnesses~$\{r_{v}\ |\ v\in V\}$ become $%
\varepsilon $-relatively-close to their fixed point~$r_{0}^{(k)}$
(i.e.~until they\emph{\ }$\varepsilon $\emph{-approximate}~$r_{0}^{(k)}$). 
That is, until ${\frac{|r_{v}-r_{0}^{(k)}|}{r_{0}^{(k)}} < \varepsilon}$ for
every ${v\in V}$. Note by Theorem~\ref{undir-conv-thm} that, at every phase, 
the~fitness values always $\varepsilon $-approximate such a limit fitness $r_{0}^{(k)}$. 
After the end of each phase, we introduce $\beta n$ stabilizers, where $\beta \in \lbrack 0,1]$. 
That is, we replace~$\beta n$ vertices (arbitrarily chosen) by individuals of fitness $1$, i.e.~by resident individuals. 
Clearly, the more the number of phases, the closer the fixed point at the end of each phase will be to $1$. 
In the following theorem we bound the number of phases needed until the system stabilizes, 
i.e.~until the fitness of \emph{every} vertex becomes sufficiently close to $1$.

\begin{theorem}
\label{var1-clique-thm}
Let $G=(V,E)$ be the complete graph with $n$ vertices. Let $\alpha \in \lbrack 0,1]$ 
be the portion of initially introduced mutants with relative fitness $r\geq 1$ in $G$ 
and let $\beta \in \lbrack 0,1]$ be the portion of the stabilizers introduced at every phase. 
Let $r_{0}^{(k)}$ be the limit fitness after phase $k$ and let $\varepsilon,\delta >0$, 
be such that $\frac{\beta}{2} > \sqrt{\varepsilon}$ and $\delta > \frac{4}{3}\sqrt{\varepsilon}$. 
Finally, let each phase $k$ run until the fitnesses $\varepsilon$-approximate their fixed point $r_{0}^{(k)}$. 
Then, after

\begin{equation}
k\geq 1+ \frac{\ln{(\frac{\varepsilon +(1+\varepsilon )\frac{1+\alpha }{2}(r-1)}{\delta 
-\frac{4}{3}\sqrt{\varepsilon }})}}{\ln({\frac{1}{(1+\varepsilon )(1-\frac{\beta }{2})}})}
\end{equation}
phases, the relative fitness of \emph{every} vertex $u\in V$ is at most $1+\delta $.
\end{theorem}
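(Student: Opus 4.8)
First I would analyze a single phase. Suppose phase $k$ begins with individuals of fitnesses that $\varepsilon$-approximate some value $r_{0}^{(k-1)}$ (for $k=1$ the initial fitnesses are $r$ and $1$, so instead of a single starting value we use that $\alpha n$ vertices have fitness $r$ and the rest fitness $1$, and the weighted average that governs the limit fitness is $1+\alpha(r-1)$). At the end of phase $k-1$ all fitnesses are within a multiplicative $\varepsilon$ of $r_{0}^{(k-1)}$, so they lie in $[(1-\varepsilon)r_{0}^{(k-1)},(1+\varepsilon)r_{0}^{(k-1)}]$; after introducing $\beta n$ stabilizers of fitness $1\le r_{0}^{(k-1)}$, the weighted average of the population drops. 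The key quantitative input is equation~(\ref{r0-upper-statement}) from Theorem~\ref{r0-clique-thm}, suitably re-applied: within phase $k$, the limit fitness $r_{0}^{(k)}$ is at most the (post-stabilizer) weighted-average fitness plus the same overshoot term $\frac{\alpha'(1-\alpha')}{1+\alpha'(r'-1)}\cdot\frac{(r'-1)^{2}}{2}$, where now $r'$ and $\alpha'$ describe the spread of fitnesses entering phase $k$. I would bound the post-stabilizer weighted average by roughly $(1+\varepsilon)(1-\tfrac{\beta}{2})$ times $r_{0}^{(k-1)}$ — the factor $1-\tfrac{\beta}{2}$ coming from replacing a $\beta$-fraction by fitness-$1$ individuals and the $1+\varepsilon$ absorbing the $\varepsilon$-slack — and the overshoot term by something proportional to $\sqrt{\varepsilon}$ once the fitness spread has shrunk (the $\frac{\beta}{2}>\sqrt{\varepsilon}$ and $\delta>\frac{4}{3}\sqrt{\varepsilon}$ hypotheses are exactly what make this term lower-order compared to the geometric decay). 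This yields a recursion of the shape
\begin{equation*}
r_{0}^{(k)}-1 \ \le\ (1+\varepsilon)\Bigl(1-\tfrac{\beta}{2}\Bigr)\bigl(r_{0}^{(k-1)}-1\bigr)\ +\ C\sqrt{\varepsilon}
\end{equation*}
for an explicit constant $C$ (tracking through, $C$ is around $\tfrac{4}{3}$).

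Next I would solve this affine recursion. Writing $\lambda=(1+\varepsilon)(1-\tfrac{\beta}{2})<1$ (which holds precisely because $\tfrac{\beta}{2}>\sqrt{\varepsilon}>\varepsilon$), iterating gives
\begin{equation*}
r_{0}^{(k)}-1 \ \le\ \lambda^{k-1}\bigl(r_{0}^{(1)}-1\bigr)\ +\ C\sqrt{\varepsilon}\cdot\frac{1}{1-\lambda},
\end{equation*}
and I would use $r_{0}^{(1)}-1\le \varepsilon+(1+\varepsilon)\tfrac{1+\alpha}{2}(r-1)$ from the first-phase version of~(\ref{r0-upper-statement}) (the $\tfrac{1+\alpha}{2}$ factor is the $\alpha(1-\alpha)/(1+\alpha(r-1))\cdot\tfrac{(r-1)}{2}$ overshoot added to $1+\alpha(r-1)$, bounded crudely). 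Then, to force $r_{0}^{(k)}\le 1+\delta$ — which, combined with $\varepsilon$-approximation and $\delta>\frac{4}{3}\sqrt{\varepsilon}$, makes every individual's fitness at most $1+\delta$ — it suffices that $\lambda^{k-1}\bigl(r_{0}^{(1)}-1\bigr)\le \delta - C\sqrt{\varepsilon}/(1-\lambda)$, i.e. roughly $\delta-\tfrac{4}{3}\sqrt{\varepsilon}$ after bounding $1-\lambda$ from below. Taking logarithms and solving for $k$ gives exactly the stated bound
\begin{equation*}
k\ \ge\ 1+\frac{\ln\!\bigl(\frac{\varepsilon+(1+\varepsilon)\frac{1+\alpha}{2}(r-1)}{\delta-\frac{4}{3}\sqrt{\varepsilon}}\bigr)}{\ln\!\bigl(\frac{1}{(1+\varepsilon)(1-\beta/2)}\bigr)}.
\end{equation*}

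The main obstacle will be the single-phase bookkeeping: Theorem~\ref{r0-clique-thm} is stated for a population with exactly two fitness levels (mutants at $r$, residents at $1$), whereas at the start of phase $k\ge 2$ the population has a spread of values plus the freshly inserted stabilizers. I would need to argue carefully that the convergence analysis of Theorem~\ref{undir-conv-thm} and the overshoot estimate~(\ref{r0-upper-statement}) still apply — in particular that the limit fitness is controlled by the post-stabilizer weighted average plus a term that is quadratic in the fitness \emph{spread} (which is $O(\sqrt{\varepsilon}\cdot r_{0}^{(k-1)})$, hence $O(\sqrt{\varepsilon})$), not quadratic in $r-1$. This is where the precise constants $\tfrac{\beta}{2}$ versus $\sqrt{\varepsilon}$ and $\delta$ versus $\tfrac{4}{3}\sqrt{\varepsilon}$ have to be reconciled with the generic estimate; everything else is routine geometric-series manipulation.
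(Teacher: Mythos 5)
Your proposal follows essentially the same route as the paper's proof: bound $r_{0}^{(1)}-1$ via~(\ref{r0-upper-statement}) by $\frac{1+\alpha}{2}(r-1)$, treat the start of each later phase as a worst-case two-level population ($\beta n$ stabilizers at fitness $1$, the rest at $(1+\varepsilon)r_{0}^{(k-1)}$) so that the same overshoot estimate yields the contraction factor $1-\frac{\beta}{2}$, solve the resulting affine recursion with $\lambda=(1+\varepsilon)(1-\frac{\beta}{2})$, and use $\frac{\beta}{2}>\sqrt{\varepsilon}$ to bound the accumulated error term $\frac{\varepsilon}{1-\lambda}$ by $\frac{4}{3}\sqrt{\varepsilon}$. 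The one bookkeeping fix: since end-of-phase fitnesses can exceed the limit by a factor $1+\varepsilon$, the recursion must be run on $(1+\varepsilon)r_{0}^{(k)}-1$ and one must demand $(1+\varepsilon)r_{0}^{(k)}\leq 1+\delta$ rather than $r_{0}^{(k)}\leq 1+\delta$ (as the paper does); note that your first-phase bound $\varepsilon+(1+\varepsilon)\frac{1+\alpha}{2}(r-1)$ is exactly the paper's bound on $(1+\varepsilon)r_{0}^{(1)}-1$, so with this adjustment the constants and the final formula come out identical.
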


\begin{proof} Consider the first phase, where initially there exist $\alpha n$ mutants with relative fitness $r$ and $(1-\alpha )n$ resident individuals with fitness $1$ each. Then, since $r\geq 1$, it follows by (\ref{r0-upper-statement}) for the fixed point $r_{0}^{(1)}$ after the first phase that
\begin{eqnarray}
r_{0}^{(1)} &\leq & 1+\alpha (r-1)\cdot (1+\frac{(1-\alpha )(r-1)}{2(1+\alpha (r-1))})  \nonumber \\
& = & 1+\frac{\alpha (r-1)}{2} \cdot (1+\frac{1+(r-1)}{1+\alpha (r-1)}) \label{r0-clique-upper-bound-2} \\
& \leq & 1+\frac{\alpha (r-1)}{2}\cdot (1+\frac{1}{\alpha })  \nonumber
\end{eqnarray}
i.e. 
\begin{equation}
r_{0}^{(1)}\leq 1+\frac{1+\alpha }{2}(r-1) \label{r0-upper-antivirus-1}
\end{equation}
Suppose that we let each phase $k\geq 1$ run until the fitnesses $\varepsilon$-approximate their fixed point $r_{0}^{(k)}$. Note that, at the start of the process, $(1-\alpha )n$ vertices have fitness $1$ and $\alpha n$ vertices have fitness $r$. Similarly, before the $k$th phase starts, $\beta n$ vertices have fitness $1$ and $(1-\beta )n$ vertices have fitness at most $(1+\varepsilon )r_{0}^{(k-1)}$. Then, we obtain similarly to (\ref{r0-upper-antivirus-1}) that the fixed point $r_{0}^{(k)}$ at iteration $k$ is in the worst case
\begin{eqnarray}
r_{0}^{(k)} & \leq & 1+\frac{1+(1-\beta )}{2}((1+\varepsilon )r_{0}^{(k-1)}-1) \nonumber \\
& = & 1+(1-\frac{\beta }{2})((1+\varepsilon )r_{0}^{(k-1)}-1) \nonumber
\end{eqnarray}
Therefore
\begin{equation}
(1+\varepsilon) r_{0}^{(k)} \leq (1+\varepsilon) + (1 + \varepsilon)(1 - \frac{\beta}{2})((1+\varepsilon)r_{0}^{(k-1)}-1) \nonumber 
\end{equation}
and thus
\begin{equation}
(1+\varepsilon )r_{0}^{(k)}-1\leq \varepsilon +(1+\varepsilon )(1-\frac{\beta}{2})((1+\varepsilon)r_{0}^{(k-1)}-1) \nonumber 
\end{equation}
Let now $\lambda =(1+\varepsilon )(1-\frac{\beta }{2})$. Then the last inequality becomes 
\begin{equation}
(1+\varepsilon )r_{0}^{(k)}-1 \leq \varepsilon +\lambda ((1+\varepsilon)r_{0}^{(k-1)}-1) \nonumber 
\end{equation}
and by induction we have
\begin{eqnarray}
(1+\varepsilon )r_{0}^{(k)}-1 & \leq & \varepsilon \sum_{i=0}^{k-2} \lambda^i +\lambda ^{k-1}((1+\varepsilon )r_{0}^{(1)}-1) \nonumber \\
& = & \varepsilon \frac{1-\lambda ^{k-1}}{1-\lambda }+\lambda ^{k-1}((1+\varepsilon )r_{0}^{(1)}-1) \nonumber
\end{eqnarray}
Therefore, (\ref{r0-upper-antivirus-1}) implies that
\begin{equation}
(1+\varepsilon )r_{0}^{(k)}-1\leq \varepsilon \frac{1-\lambda ^{k-1}}{1-\lambda }+\lambda ^{k-1}(\varepsilon +(1+\varepsilon )\frac{1+\alpha }{2} (r-1))  \label{X}
\end{equation}
At the end of the $k$th phase, the relative fitness of each vertex is at most $(1+\varepsilon )r_{0}^{(k)}$. Now, in order to compute at least how many phases are needed to reach a relative fitness $(1+\varepsilon)r_{0}^{(k)}\leq 1+\delta $ for every vertex $u\in V$, it suffices by (\ref{X}) to compute the smallest value of $k$, such that
\begin{equation}
\varepsilon \frac{1-\lambda ^{k-1}}{1-\lambda }+\lambda ^{k-1}(\varepsilon+(1+\varepsilon )\frac{1+\alpha }{2}(r-1))\leq \delta  \label{Y}
\end{equation}
Recall now that $\sqrt{\varepsilon} < \frac{\beta}{2} \leq \frac{1}{2}$ by assumption. 
Therefore $\lambda =(1+\varepsilon )(1-\frac{\beta }{2})<(1+\varepsilon )(1-\sqrt{\varepsilon })$, i.e.~$\lambda <1$. Thus $1-\lambda ^{k-1}<1$ and it suffices from (\ref{Y}) to compute the smallest number $k$ for which
\begin{equation}
\frac{\varepsilon }{1-\lambda }+\lambda ^{k-1}(\varepsilon +(1+\varepsilon )%
\frac{1+\alpha }{2}(r-1))\leq \delta   \label{W}
\end{equation}
Note now that
\begin{eqnarray}
\frac{\varepsilon }{1-\lambda } & = &\frac{\varepsilon }{1-(1+\varepsilon )(1-\frac{\beta }{2})} \nonumber \\
& = &\frac{\varepsilon }{\frac{\beta }{2}(1+\varepsilon )-\varepsilon } \nonumber
\end{eqnarray}
Thus, since $\frac{\beta }{2}>\sqrt{\varepsilon }$ by assumption, it follows that 
\begin{equation}
\frac{\varepsilon }{1-\lambda }<\frac{\varepsilon }{\sqrt{\varepsilon } (1+\varepsilon )-\varepsilon }=\frac{\sqrt{\varepsilon }}{1+\varepsilon - \sqrt{\varepsilon }}  \label{Z}
\end{equation}
However $1+\varepsilon -\sqrt{\varepsilon }\geq \frac{3}{4}$ for every $\varepsilon \in (0,1)$, and thus it follows by (\ref{Z}) that $\frac{\varepsilon}{1-\lambda}<\frac{4}{3}\sqrt{\varepsilon }$. Therefore it suffices from (\ref{W}) to compute the smallest number $k$ for which
\begin{displaymath}
\frac{4}{3}\sqrt{\varepsilon }+\lambda ^{k-1}(\varepsilon +(1+\varepsilon ) \frac{1+\alpha }{2}(r-1))\leq \delta. 
\end{displaymath}
That is,
\begin{displaymath}
\lambda^{k-1}\leq \frac{\delta -\frac{4}{3}\sqrt{\varepsilon }}{\varepsilon +(1+\varepsilon )\frac{1+\alpha }{2}(r-1)}
\end{displaymath}
or equivalently
\begin{displaymath}
k\geq 1+ \frac{\ln{(\frac{\varepsilon +(1+\varepsilon )\frac{1+\alpha }{2}(r-1)}{\delta 
-\frac{4}{3}\sqrt{\varepsilon }})}}{\ln({\frac{1}{(1+\varepsilon )(1-\frac{\beta }{2})}})}
\end{displaymath}
This completes the proof of the theorem.
\end{proof}

\subsection{Continuous control of invasion\label{antibiotics-continuous-subsec}}

In this section we present another variation of controlling the invasion of advantageous mutants, using our new evolutionary model. 
In this variation, we do not proceed in phases; we rather introduce \emph{at every single iteration} 
of the process $\beta n$ stabilizers, where $\beta \in \lbrack 0,1]$ is a small portion of the individuals of the population. 
For simplicity of the presentation, we assume that at every iteration the $\beta n$ stabilizers with relative 
fitness $1$ are the same.

\begin{theorem}
\label{var2-clique-thm}Let $G=(V,E)$ be the complete graph with $n$
vertices. Let $\alpha \in \lbrack 0,1]$ be the portion of initially
introduced mutants with relative fitness $r\geq 1$ in $G$ and let $\beta \in
\lbrack 0,1]$ be the portion of the stabilizers introduced at every iteration. 
Then, for every $\delta >0$, after\vspace{-0.15cm}%
\begin{equation}
k\geq \frac{r}{\beta }(n-1)\cdot \ln (\frac{r-1}{\delta })\vspace{-0.15cm}
\end{equation}%
iterations, the relative fitness of \emph{every} vertex $u\in V$ is at most $1+\delta$.
\end{theorem}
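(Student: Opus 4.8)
The plan is to control a single scalar, the maximum relative fitness $M(k):=\max_{u\in V}r_u(k)$ at iteration $k$, measured once the $\beta n$ stabilizers of the fixed set $T$ have been (re)set to fitness $1$, and to show that $M(k)-1$ shrinks by a constant multiplicative factor at every iteration. First I would record the invariant that every fitness stays in $[1,r]$: it holds initially, it is preserved by the update~(\ref{recursion-P-compact})--(\ref{P-exact}) because each $r_{u_i}(k+1)$ is a convex combination of the values $r_{u_j}(k)$ (the rows of $P$ sum to $1$; cf.~Observation~\ref{convex-bound-parallel-obs}), and it is preserved by resetting stabilizers to $1\in[1,r]$. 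In particular $\Sigma(k)=\sum_{u\in V}r_u(k)\le nr$ for all $k$.

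The heart of the argument is a one-step contraction for $M(k)$. Since $G$ is complete, $\deg(u)=n-1$ for every $u$, so by~(\ref{P-clique}) the entry of $P$ from any vertex $u_l\notin T$ to a stabilizer vertex $u_j\in T$ (which has $r_{u_j}(k)=1$) equals $P_{lj}=\frac{1}{(n-1)\Sigma(k)}$; hence the total weight the update places on $T$ is the same for every such vertex, namely $p(k):=\frac{\beta n}{(n-1)\Sigma(k)}$. Writing $r_{u_l}(k+1)=p(k)\cdot 1+\sum_{u_j\notin T}P_{lj}\,r_{u_j}(k)$ with $\sum_{u_j\notin T}P_{lj}=1-p(k)$, and bounding $r_{u_j}(k)\le M(k)$ in the remaining sum, I obtain $r_{u_l}(k+1)-1\le(1-p(k))(M(k)-1)$ for every $u_l\notin T$; for $u_l\in T$ this is trivial since $r_{u_l}(k+1)=1$ and $M(k)\ge 1$. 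Taking the maximum over $u_l$ and using $p(k)\ge\frac{\beta}{(n-1)r}$ (from $\Sigma(k)\le nr$) gives $M(k+1)-1\le\left(1-\frac{\beta}{(n-1)r}\right)(M(k)-1)$, and therefore, since $M(0)=r$,
\begin{equation*}
M(k)-1\ \le\ (r-1)\left(1-\frac{\beta}{(n-1)r}\right)^{k}.
\end{equation*}

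Finally I would use $1-x\le e^{-x}$ to get $M(k)-1\le(r-1)\,e^{-k\beta/((n-1)r)}$, which is at most $\delta$ as soon as $k\ge\frac{r}{\beta}(n-1)\ln\frac{r-1}{\delta}$ (the claim is vacuous when $\delta\ge r-1$). The only delicate point is the bookkeeping around the stabilizers: that the weight the update puts on $T$ is vertex-independent and equal to $\frac{\beta n}{(n-1)\Sigma(k)}$, and that immediately after each reset the stabilizers contribute exactly the value $1$ — both are direct consequences of the complete-graph form~(\ref{P-clique}) of $P$ (the only mild nuisance being the order of ``update'' versus ``reset'' within an iteration, which affects only the very first step and which one may normalize away by assuming the stabilizers already have fitness $1$ at time $0$). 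One could alternatively imitate the proof of Theorem~\ref{r0-clique-thm}, reducing to the two symmetry classes of non-stabilizer mutants and non-stabilizer residents and tracking their gap $\Delta(k)=(r-1)\left(\frac{n-2}{n-1}\right)^{k}$ together with the recursion for their common mass; but the direct bound on $M(k)$ above avoids this and uses only the inequality $r_{u_j}(k)\le M(k)$.
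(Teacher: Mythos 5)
Your proof is correct, but it takes a genuinely different route from the paper's. The paper exploits the symmetry of the complete graph to reduce the dynamics to two evolving fitness classes $r_{1}(k)$ (initial mutants) and $r_{2}(k)$ (non-stabilizer residents) alongside the pinned stabilizers, derives the exact gap decay $r_{1}(k)-r_{2}(k)=(r-1)\left(\frac{n-2}{n-1}\right)^{k}$ as in (\ref{Delta-antivirus2-3}), and only then extracts the contraction $\Delta(k+1)\leq\Delta(k)\bigl(1-\frac{\beta}{r(n-1)}\bigr)$ for $\Delta(k)=r_{1}(k)-1$ from (\ref{r1-antivirus2})--(\ref{r1-antivirus2-4}); you instead bound the maximum fitness $M(k)$ directly, using only that each row of $P$ is a probability vector, that the stabilizer block receives total weight exactly $\frac{\beta n}{(n-1)\Sigma(k)}$ at value $1$, and that $\Sigma(k)\leq nr$, which yields the same contraction factor $1-\frac{\beta}{(n-1)r}$ without ever tracking the mutant/resident split or the gap recursion. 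What your approach buys: it is shorter, it does not rely on the symmetric three-class initial condition (so it extends verbatim to an arbitrary initial fitness vector in $[1,r]$ and to stabilizer sets chosen arbitrarily rather than ``the same at every iteration''), and by using $1-x\leq e^{-x}$ you get the stated iteration count non-asymptotically, whereas the paper passes through $\bigl(1+\tfrac{1}{\frac{r}{\beta}(n-1)-1}\bigr)^{\frac{r}{\beta}(n-1)}\rightarrow e$ and thus only claims the bound for sufficiently large $n$. What the paper's route buys is finer information: the explicit two-class recursions and the exact gap $(r-1)\left(\frac{n-2}{n-1}\right)^{k}$, which parallel the analysis of Theorem~\ref{r0-clique-thm} and show how the non-stabilizer population homogenizes, not merely that its maximum decays. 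Two cosmetic points in your write-up: $M(0)\leq r$ (equality may fail if stabilizers overlap the mutants or $\alpha=0$), which is all you need; and your remark about the update-versus-reset order is exactly the paper's convention that the stabilizers sit at fitness $1$ at the start of every iteration.
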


\begin{proof}
Recall that we assumed for simplicity reasons that at every iteration the $%
\beta n$ individuals with relative fitness $1$ are the same. Note
furthermore that at very iteration $k$ we have by symmetry three different
fitnesses on the vertices: (a) the $\alpha n$ initial mutants with fitness $%
r_{1}(k)$, (b) the $\beta n$ ``stabilizers'' with fitness $1$, and (c) the rest $(1-\alpha -\beta )n$
individuals with fitness $r_{2}(k)$, where $1\leq r_{2}(k)\leq r_{1}(k)$ by
Observation~\ref{convex-bound-parallel-obs}. Note that $r_{2}(0)=1$. Let $%
\gamma =1-\alpha -\beta $. Then, we obtain similarly to (\ref{r1-clique})
and (\ref{r2-clique}) in the proof of Theorem~\ref{r0-clique-thm} that for
every $k\geq 0$%
\begin{eqnarray}
r_{1}(k+1) &=&(1-\frac{(\gamma r_{2}(k)+\beta )n}{(n-1)\Sigma (k)})\cdot
r_{1}(k)+\frac{\gamma r_{2}(k)n}{(n-1)\Sigma (k)}\cdot r_{2}(k)+\frac{\beta n%
}{(n-1)\Sigma (k)}  \label{r1-antivirus2} \\
&=&r_{1}(k)-\frac{1}{(n-1)\Sigma (k)}(\gamma
nr_{2}(k)(r_{1}(k)-r_{2}(k))+\beta n(r_{1}(k)-1))  \notag
\end{eqnarray}%
and 
\begin{eqnarray}
r_{2}(k+1) &=&\frac{\alpha r_{1}(k)n}{(n-1)\Sigma (k)}\cdot r_{1}(k)+(1-%
\frac{(\alpha r_{1}(k)+\beta )n}{(n-1)\Sigma (k)})\cdot r_{2}(k)+\frac{\beta
n}{(n-1)\Sigma (k)}  \label{r2-antivirus2} \\
&=&r_{2}(k)+\frac{1}{(n-1)\Sigma (k)}(\alpha
nr_{1}(k)(r_{1}(k)-r_{2}(k))-\beta n(r_{2}(k)-1))  \notag
\end{eqnarray}%
where $\Sigma (k)=n(\alpha r_{1}(k)+\gamma r_{2}(k)+\beta )$. %
It follows now by (\ref{r1-antivirus2}) and (\ref{r2-antivirus2}) that%
\begin{eqnarray*}
r_{1}(k+1)-r_{2}(k+1) &=&r_{1}(k)-r_{2}(k) \\
&&-\frac{(\alpha nr_{1}(k)+\gamma nr_{2}(k))(r_{1}(k)-r_{2}(k))+\beta
n(r_{1}(k)-r_{2}(k))}{(n-1)\Sigma (k)} \\
&=&r_{1}(k)-r_{2}(k)-\frac{\Sigma (k)(r_{1}(k)-r_{2}(k))}{(n-1)\Sigma (k)}
\end{eqnarray*}%
and thus%
\begin{equation*}
r_{1}(k+1)-r_{2}(k+1)=\frac{n-2}{n-1}(r_{1}(k)-r_{2}(k))
\end{equation*}%
Therefore, since $r_{2}(0)=1$ and $r_{1}(0)=r\geq 1$, it follows that for
every $k\geq 0$,%
\begin{equation}
r_{1}(k)-r_{2}(k)=(r-1)\cdot \left( \frac{n-2}{n-1}\right) ^{k}
\label{Delta-antivirus2-3}
\end{equation}%
By substitution of (\ref{Delta-antivirus2-3}) into (\ref{r1-antivirus2}) it follows that%
\begin{equation}
r_{1}(k+1)=r_{1}(k)-\frac{n}{(n-1)\Sigma (k)}(\gamma r_{2}(k)(r-1)\left( 
\frac{n-2}{n-1}\right) ^{k}+\beta (r_{1}(k)-1))  \label{r1-antivirus2-2}
\end{equation}%
Define now $\Delta (k)=r_{1}(k)-1$. Then, it follows by (\ref%
{r1-antivirus2-2}) that%
\begin{eqnarray}
\Delta (k+1) &=&\Delta (k)\cdot (1-\frac{\beta n}{(n-1)\Sigma (k)})-\frac{%
\gamma nr_{2}(k)}{(n-1)\Sigma (k)}(r-1)\left( \frac{n-2}{n-1}\right) ^{k}
\label{r1-antivirus2-3} \\
&<&\Delta (k)\cdot (1-\frac{\beta n}{(n-1)\Sigma (k)})  \notag
\end{eqnarray}%
Note now that $\frac{\beta n}{\Sigma (k)}\geq \frac{\beta }{r}$, and thus (%
\ref{r1-antivirus2-3}) implies that%
\begin{equation}
\Delta (k+1)\leq \Delta (k)\cdot \left( 1-\frac{\beta }{r(n-1)}\right)
\label{r1-antivirus2-4}
\end{equation}%
Denote now for the purposes of the proof $\lambda =1-\frac{\beta }{r(n-1)}=%
\frac{n-1-\frac{\beta }{r}}{n-1}$. Then, it follows by the system of
inequalities in (\ref{r1-antivirus2-4}) that for every $k\geq 0$%
\begin{eqnarray}
\Delta (k) &\leq &\Delta (0)\cdot \lambda ^{k}  \label{r1-antivirus2-5} \\
&=&(r-1)\cdot \lambda ^{k}  \notag
\end{eqnarray}%
In order to compute at least how many iterations are needed such that $%
r_{1}(k)\leq 1+\delta $, i.e.~$\Delta (k)\leq \delta $, it suffices by (\ref%
{r1-antivirus2-5}) to compute the smallest value of $k$, such that%
\begin{equation*}
(r-1)\cdot \lambda ^{k}\leq \delta
\end{equation*}%
i.e.%
\begin{eqnarray}
\frac{1}{\lambda ^{k}}=(\frac{n-1}{n-1-\frac{\beta }{r}})^{k} &\geq &\frac{%
r-1}{\delta }\Leftrightarrow  \label{r1-antivirus2-7} \\
(1+\frac{1}{\frac{r}{\beta }(n-1)-1})^{k} &\geq &\frac{r-1}{\delta }  \notag
\end{eqnarray}%
However, $(1+\frac{1}{\frac{r}{\beta }(n-1)-1})^{\frac{r}{\beta }%
(n-1)}\rightarrow e$ as $n\rightarrow \infty $. Thus, for sufficiently large 
$n$, (\ref{r1-antivirus2-7}) is satisfied when 
\begin{equation*}
e^{\frac{k}{\frac{r}{\beta }(n-1)}}\geq \frac{r-1}{\delta }
\end{equation*}%
or equivalently when 
\begin{equation*}
k\geq \frac{r}{\beta }(n-1)\cdot \ln (\frac{r-1}{\delta })
\end{equation*}%
This completes the proof of the theorem.
\end{proof}

\begin{observation}
\label{bound-variation-2-obs}
The bound in Theorem~\ref{var2-clique-thm} of the number of iterations needed to 
achieve everywhere a sufficiently small relative fitness is independent of the 
portion $\alpha \in \lbrack 0,1]$ of initially placed mutants in the graph. 
Instead, it depends only on the initial relative fitness $r$ of the mutants 
and on the portion $\beta \in \lbrack 0,1]$ of the vertices, 
to which we introduce the stabilizers. 
\end{observation}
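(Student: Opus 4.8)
The plan is to reduce Theorem~\ref{var2-clique-thm} to a one-dimensional geometric recursion by exploiting the symmetry of the complete graph, in the same spirit as the proof of Theorem~\ref{r0-clique-thm} but now tracking \emph{three} fitness classes instead of two because of the fixed stabilizers. First I would observe that, since the same $\beta n$ stabilizers have their fitness reset to $1$ at every step, at each iteration $k$ there are by symmetry only three distinct fitnesses: $r_{1}(k)$ on the $\alpha n$ original mutants, the constant $1$ on the $\beta n$ stabilizers, and $r_{2}(k)$ on the remaining $\gamma n$ vertices, where $\gamma = 1-\alpha-\beta$, with $1 \le r_{2}(k) \le r_{1}(k)$ by Observation~\ref{convex-bound-parallel-obs} and $r_{2}(0)=1$. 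Substituting $\deg(u_{i})=n-1$ into the update rule~(\ref{P-exact}) and then overwriting the stabilizer coordinates with $1$, I would write the two scalar recursions for $r_{1}(k+1)$ and $r_{2}(k+1)$ in terms of $r_{1}(k)$, $r_{2}(k)$ and $\Sigma(k) = n(\alpha r_{1}(k)+\gamma r_{2}(k)+\beta)$.

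The heart of the argument is to control the two quantities $r_{1}(k)-r_{2}(k)$ and $\Delta(k):=r_{1}(k)-1$ in turn. Subtracting the two recursions, the $\beta$-terms cancel and the remaining $\alpha$- and $\gamma$-contributions recombine into a single factor $\Sigma(k)$ that cancels the denominator, leaving the clean relation $r_{1}(k+1)-r_{2}(k+1)=\frac{n-2}{n-1}(r_{1}(k)-r_{2}(k))$, hence $r_{1}(k)-r_{2}(k)=(r-1)\bigl(\frac{n-2}{n-1}\bigr)^{k}$. Feeding this closed form into the recursion for $r_{1}$ gives $\Delta(k+1)=\Delta(k)\bigl(1-\frac{\beta n}{(n-1)\Sigma(k)}\bigr)-(\text{a nonnegative term})$, so in particular $\Delta(k+1)\le \Delta(k)\bigl(1-\frac{\beta n}{(n-1)\Sigma(k)}\bigr)$. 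I would then kill the $\Sigma(k)$-dependence with the crude uniform bound $\Sigma(k)\le rn$ (all fitnesses lie in $[1,r]$), yielding $\frac{\beta n}{\Sigma(k)}\ge \frac{\beta}{r}$ and therefore the $n$- and $k$-uniform geometric recursion $\Delta(k+1)\le \Delta(k)\bigl(1-\frac{\beta}{r(n-1)}\bigr)$.

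The rest is bookkeeping: unrolling gives $\Delta(k)\le (r-1)\lambda^{k}$ with $\lambda=1-\frac{\beta}{r(n-1)}$, and requiring $(r-1)\lambda^{k}\le\delta$ is equivalent to $\bigl(1+\frac{1}{\frac{r}{\beta}(n-1)-1}\bigr)^{k}\ge\frac{r-1}{\delta}$; using $\bigl(1+\frac{1}{m}\bigr)^{m}\to e$ with $m=\frac{r}{\beta}(n-1)-1$ yields the stated threshold $k\ge \frac{r}{\beta}(n-1)\ln\bigl(\frac{r-1}{\delta}\bigr)$ for large $n$, and since $r_{1}(k)$ is the largest of the three fitnesses, $\Delta(k)\le\delta$ means every vertex has relative fitness at most $1+\delta$. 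I expect the only real obstacle to be the modified update itself: writing the two scalar recursions correctly when the stabilizers are both averaged \emph{and} then reset, and verifying the two cancellations --- of the $\beta$-terms and of the recombination into $\Sigma(k)$ --- that make the gap $r_{1}-r_{2}$ evolve as simply as in the unmodified model. Once those are settled, the bound $\Sigma(k)\le rn$ and the geometric iteration are entirely routine, and this is also where it becomes apparent --- as recorded in Observation~\ref{bound-variation-2-obs} --- that $\alpha$ never survives into the final bound.
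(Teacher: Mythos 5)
Your proposal is correct and takes essentially the same route as the paper: Observation~\ref{bound-variation-2-obs} is simply read off from the bound in Theorem~\ref{var2-clique-thm}, whose proof in the paper proceeds exactly as you outline --- three symmetric fitness classes $r_1(k)$, $1$, $r_2(k)$, the gap $r_1(k)-r_2(k)$ contracting by the factor $\frac{n-2}{n-1}$, the geometric recursion $\Delta(k+1)\le\Delta(k)\bigl(1-\frac{\beta}{r(n-1)}\bigr)$ obtained from $\Sigma(k)\le rn$, and the $\bigl(1+\frac{1}{m}\bigr)^{m}\to e$ step, with $\alpha$ visibly absent from the final threshold. One small wording nit: in the subtraction the $\beta$-terms do not cancel; they contribute $\beta n\,(r_1(k)-r_2(k))$, which is exactly what is needed, together with the $\alpha$- and $\gamma$-terms, to assemble the factor $\Sigma(k)$ that cancels the denominator.
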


\section{Concluding remarks\label{conclusions}}

In this paper we investigated alternative models for evolutionary dynamics
on graphs. In particular, we first considered the evolutionary model
proposed in~\cite{Nowak05}, where vertices of the graph correspond to
individuals of the population. 
We provided in this model generic upper and lower bounds of the fixation probability on a general graph $G$ 
and we presented the first class of undirected graphs (called clique-wheels) that act as suppressors of
selection. Specifically, we proved that the fixation probability of the
clique-wheel graphs is at most one half of the fixation probability of the
complete graph (i.e.~the homogeneous population) as the number of vertices
increases. An interesting open question in this model is whether there exist 
functions $f_{1}(r)$ and~$f_{2}(r)$ 
(independent of the size of the input graph), 
such that the fixation probability of every undirected graph $G$ lies between~$f_{1}(r)$ 
and~$f_{2}(r)$. Another line of future research
is to investigate the behavior of the model of~\cite{Nowak05} in~the case
where there are more than two types of individuals (\emph{aggressive} vs.~\emph{non-aggressive}) in the~graph.

As our main contribution, we introduced in this paper a new evolutionary
model based on mutual influences between individuals. In contrast to the
model presented in~\cite{Nowak05}, in this new model all individuals
interact \emph{simultaneously} and the result is a compromise
between aggressive and non-aggressive individuals. In other words, the
behavior of the individuals in our new model and in the model of~\cite%
{Nowak05} can be interpreted as an \emph{``aggregation''} vs.~an \emph{``all-or-nothing''} 
strategy, respectively. We prove that our new evolutionary model 
admits a potential function, which guarantees the convergence of the system for any graph 
topology and any initial fitnesses on the vertices
of the underlying graph. Furthermore, we provide almost tight bounds on the
limit fitness for the case of a complete graph, as well as a bound on the
number of steps needed to approximate the stable state. 
Finally, our new model appears to be useful also in the abstract modeling of new control mechanisms 
over invading populations in networks. 
As an example, we demonstrated its usefulness by analyzing the behavior of two alternative control approaches. 
Many interesting open questions lie ahead in our new model. For instance, what is the speed
of convergence and what is the limit fitness in arbitrary undirected graphs?
What happens if many types of individuals simultaneously interact at every iteration?

\vspace{0.3cm}
\noindent \textbf{Acknowledgment.} Paul G. Spirakis wishes to thank Josep Diaz, 
Leslie Ann Goldberg, and Maria Serna, for many inspiring discussions on the model of~\cite{Nowak05}.

{
\bibliographystyle{abbrv}
\bibliography{ref-evolution}
}

\end{document}